\DeclareMathAlphabet{\mathbbold}{U}{bbold}{m}{n}
\newcommand*{\interpreted}[1]{{[\![ #1 ]\!]}}
\newcommand*{\eqclass}[1]{{[#1]_\equiv}}
\newcommand*{\ov}[1]{{\overline{#1}}}
\def\Epsilon{\mathcal{E}}
\def\rank{\mathrm{rank}}
\def\sig{\mathrm{sig}}
\def\iff{\mathrel{\mathrm{iff}}}
\newtheorem{theorem}{Theorem}
\newtheorem{definition}[theorem]{Definition}
\newtheorem{lemma}[theorem]{Lemma}
\newtheorem{proposition}[theorem]{Proposition}
\newtheorem{corollary}[theorem]{Corollary}
\newtheorem{example}[theorem]{Example}
\newenvironment{proof}[1][Proof]{\begin{trivlist}
\item[\hskip \labelsep {\bfseries #1}]}{\end{trivlist}}
\providecommand{\squareforqed}{\hbox{\rlap{$\sqcap$}$\sqcup$}}
\providecommand{\QED}{{\unskip\nobreak\hfil%
	\penalty50\hskip1em\null\nobreak\hfil\squareforqed%
	\parfillskip=0pt\finalhyphendemerits=0\endgraf}}
\def\<#1>{\langle #1 \rangle}  
\title{An Extension of Proof Graphs for Disjunctive Parameterised Boolean Equation Systems}
\author{Yutaro Nagae
\institute{\quad Graduate School of Information Science\\
\quad Nagoya University}
\email{nagae\_y@trs.cm.is.nagoya-u.ac.jp}
\and
Masahiko Sakai
\institute{\quad Graduate School of Information Science\\
\quad Nagoya University}
\email{\quad sakai@is.nagoya-u.ac.jp}
\and
Hiroyuki Seki
\institute{\quad Graduate School of Information Science\\
\quad Nagoya University}
\email{\quad seki@is.nagoya-u.ac.jp}
}
\begin{document}
\maketitle

\begin{abstract}
 A parameterised Boolean equation system (PBES) is a set of equations
 that defines sets as the least and/or greatest fixed-points
 that satisfy the equations.
 This system is regarded as a declarative program defining functions that take a datum and returns a Boolean value.
 The membership problem of PBESs is a problem to decide whether a given element is in the defined set or not,
 which corresponds to an execution of the program.
 This paper introduces reduced proof graphs, and studies a technique to solve
 the membership problem of PBESs, which is undecidable in general, by transforming it into a reduced proof graph.

 A vertex $X(v)$ in a proof graph represents that the data $v$ is in
 the set $X$, if the graph satisfies conditions induced from a given
 PBES.  Proof graphs are, however, infinite in general.  Thus we
 introduce vertices each of which stands for a set of vertices of the
 original ones, which possibly results in a finite graph.
 For a subclass of disjunctive PBESs, we clarify
 some conditions which reduced proof graphs should satisfy.
 We also show some examples
 having no finite proof graph except for reduced one.
 We further propose a reduced dependency space, which contains reduced proof graphs as sub-graphs if a proof graph exists.
 We provide a procedure to construct finite reduced dependency spaces,
 and show the soundness and completeness of the procedure.
\end{abstract}

\section{Introduction}
A \emph{Parameterised Boolean Equation System} (PBES)~\cite{Groote2004,GROOTE2005332} is a set of equations
denoting some sets as the least and/or greatest fixed-points.
PBESs can be used as a powerful tool for solving a variety of problems
such as process equivalences~\cite{Chen:2007:ECI:2392200.2392211}, 
model checking~\cite{Groote:2005:MPD:1099102.1099103}, and so on.
 
We explain PBESs by an example PBES $\Epsilon_1$, which consists of the following two equations:
\[
 \begin{array}[t]{rcl}
  \nu X(n:N) & = & X(n+1) \vee Y(n)\\
  \mu Y(n:N) & = & Y(n+1)
 \end{array}
\]
$X(n:N)$ denotes that $n$ is a natural number and a formal parameter of $X$.
Each of the predicate variables $X$ and $Y$ represents a set of natural numbers regarding that $X(n)$ is true if and only if $n$ is in $X$.
These sets are determined as fixed-points that satisfy the equations,
where $\mu$ (resp.\ $\nu$) represents the least (resp.\ greatest) fixed-point.
In the PBES $\Epsilon_1$, $Y$ is an empty set since $Y$ is the least fixed-point
satisfying that 
$Y(n) \iff Y(n+1)$ for any $n\geq 0$.
Similarly, $X$ is equal to $\mathbb{N}$ since $X$ is the greatest fixed-point
satisfying that 
$X(n) \iff X(n+1) \vee Y(n)$ for any $n\geq 0$.

The membership problem for PBESs is undecidable in general, and some techniques
have been proposed to solve the problem for some subclasses of PBESs:
one by instantiating a PBES to a
\emph{Boolean Equation System} (BES)~\cite{PLOEGER2011637} and
one by constructing a proof graph~\cite{Cranen2013}.
In the latter method, the membership problem is reduced to an existence of a proof graph.  A proof graph that justifies $X(0)$ for $\Epsilon_1$ is shown as follows:
\[
 \xymatrix{
   X(0) \ar[r] & X(1) \ar[r] & X(2) \ar[r] & \cdots,
 }
\]
where each vertex $X(n)$ represents that the predicate $X(n)$ holds.
If there exists a finite proof graph for a given instance of the problem,
it is not difficult to 
find it
mechanically.
However, finite proof graphs do not always exist.

In this paper, we extend proof graphs and propose \emph{reduced} proof graphs, where vertices stands for a set of vertices in the original ones.
We clarify some conditions which reduced proof graphs for data-quantifier free and disjunctive PBESs should satisfy,
where data-quantifier free and disjunctive PBESs are a subclass of disjunctive PBESs~\cite{Koolen2015}.
We also provide a \emph{reduced} dependency space and show that it contains reduced proof graphs as sub-graphs if a proof graph exists.
We give a procedure to construct a finite reduced dependency space,
and show soundness and completeness of the procedure.
We also show examples having no finite proof graph but finite reduced ones.

\section{PBESs and Proof Graphs}
We follow~\cite{Cranen2013} and~\cite{GROOTE2005332} for basic notions related to PBESs and proof graphs.

We assume non-empty \emph{data sorts}. 
For every data sort $D$, we 
assume a set $\mathcal{V}_D$ of \emph{data variables} and a \emph{semantic domain} $\mathbb{D}$ corresponding to it.
In this paper, we assume the existence of a sort $B$ corresponding to the Boolean domain $\mathbb{B} = \{\mathbbold{t},\mathbbold{f}\}$
and a sort $N$ corresponding to the natural numbers $\mathbb{N}$.
A \emph{data environment} $\delta$ is a function 
that maps each data variable to a value of the associated type.
A \emph{data environment update} $\delta[v/d]$ for a data variable $d$ of a sort $D$ and $v\in\mathbb{D}$ is a mapping defined by $\delta[v/d](d') = v$ if $d=d'$ and $\delta[v/d](d')=\delta(d')$ otherwise.
We assume appropriate \emph{data functions} on $\mathbb{D}$, and
use $\interpreted{e}\delta$ to represent a value in $\mathbb{D}$
obtained by the evaluation of a \emph{data expression} $e$ of a sort $D$
under a data environment $\delta$.
A data expression interpreted to a value in $\mathbb{B}$ is called a \emph{Boolean expression}.
In this paper, we use usual operators and constants like $\mathrm{true}$, $\mathrm{false}$, $\le$, $0$, $1$, $+$, $-$, and so on, as data functions in examples without stating.

A \emph{Parameterised Boolean Equation System} (PBES) is a set of equations defined as follows.  The syntax of PBESs is 
given through the following grammar:
  \[
  \begin{array}[t]{rcl}
   \Epsilon & ::= & \emptyset \ |\
	\left(\nu X\left(d: D\right) = \varphi\right) \Epsilon\ |\
	\left(\mu X\left(d:D\right) = \varphi\right) \Epsilon \\
   \varphi & ::= & b \ |\ \varphi \wedge \varphi \ |\ \varphi \vee
	\varphi \ |\ \forall d{:}D\ \varphi \ |\ \exists d{:}D\ \varphi \ |\ X(e)
  \end{array}
  \]
Here, $\emptyset$ is used for the \emph{empty PBES}, and quantifiers $\mu,\nu$ are used to indicate the least and greatest fixed-points, respectively.
$\varphi$ is a \emph{predicate formula},
$X$ is a predicate variable sorted with $D \rightarrow B$,
$b$ is a Boolean expression,  $d$ is a data variable of a sort
$D$, and $e$ is a data expression. 

  A PBES is regarded as a sequence of equations,
  \[ \Epsilon\ =\
  \left(\sigma_1 X_1\left(d: D\right) = \varphi_1\right) \mathrel{\cdots}
  \left(\sigma_n X_n\left(d: D\right) = \varphi_n\right)
  \]
  where $\sigma_i \in \{\mu,\nu\}$ ($1 \leq i \leq n$).
We say $\Epsilon$ is \emph{closed} if it contains no free predicate
variables as well as no free data variables.
Note that the negation is allowed only in expressions
$b$ or $e$ as a data function.

 \begin{example} \label{eg:PBES}
  A PBES $\Epsilon_2$ is given as follows:
  \[
  \begin{array}[t]{rcl}
   \nu X_1(d: N)& = & (\mathrm{true} \wedge X_1(d+1))\ \vee\ \left(d \geq 1 \wedge X_2\left(d\right)\right)\\
   \mu X_2(d: N)& = & (\mathrm{true} \wedge X_2(d+1))\ \vee\ \left(d \mathrel{=} 0 \wedge X_1\left(d\right)\right)
  \end{array}
  \]
 \end{example}
Obviously the occurrences of '$\mathrm{true}$' are redundant in the expressions.
They are necessary for the subclass introduced in Section~\ref{sec:ePG}.

Since the definition of the semantics is complex, we will give an intuition by an example before introducing the formal definition.
The meaning of a PBES is determined in the bottom-up order.
Considering a PBES $\Epsilon_2$ in Example~\ref{eg:PBES},
we first look at the second equation, which defines a set $X_2$.
The set $X_2$ is fixed depending on the free variable $X_1$, i.e., the equation should be read as that $X_2$ is the least set satisfying the following condition for any $v\in \mathbb{N}$:
\[
 v \in X_2\ \iff\ v+1 \in X_2\ \vee\ (v=0\wedge v\in X_1).
\]
Thus the set $X_2$ is fixed as
\[  X_2 = 
   \begin{cases}
        \{0\}      & \text{if $0\in X_1$}\\
         \emptyset & \text{otherwise}
   \end{cases},
\]
i.e., $X_2(v) \iff X_1(v)\wedge v=0$ for any $v\in \mathbbold{N}$.
Next, we replace the occurrence of $X_2$ in the first equation, which results in the following equation:
\[
   \nu X_1(d: N)\ =\  X_1(d+1)\ \vee\ \left(d \geq 1 \wedge (X_1\left(d\right) \wedge d=0)\right).
\]
Since this is simplified as $\nu X_1(d: N)\ =\  X_1(d+1)$, the set
$X_1$ is fixed as the greatest set satisfying that $v \in X_1\ \iff\ v+1 \in X_1$ for any $v\in \mathbb{N}$.
All in all, we obtain $X_1=\mathbb{N}$ and $X_2=\{0\}$.
This is formally defined~\cite{GROOTE2005332} as shown below.

We assume a \emph{predicate environment}
$\theta : \mathcal{P} \rightarrow (\mathbb{D} \rightarrow \mathbb{B})$
for a set $\mathcal{P}$ of predicate variables, i.e.,
$\theta$ assigns a function to each predicate variable.
We define a \emph{predicate environment update} $\theta [f/X]$ in a similar way to a data
environment update.
The semantics of a predicate formula $\varphi$ is defined as follows:
\[
 \begin{array}{rcl}
  \interpreted{b}\theta\delta &=& \interpreted{b}\delta\\
  \interpreted{X(e)}\theta\delta & = &
   \theta(X)(\interpreted{e}\delta) \\
  \interpreted{\varphi_1 \oplus \varphi_2}\theta\delta & = &
   \interpreted{\varphi_1}\theta\delta \oplus
   \interpreted{\varphi_2}\theta\delta \\
  \interpreted{\diamond d{:}D\ \varphi}\theta\delta & = &
   \diamond v \in \mathbb{D}\ \interpreted{\varphi}\theta\delta[v/d]
 \end{array}
\]
where $\oplus \in \{\vee,\wedge\}$ and $\diamond = \{\forall, \exists\}$.

 \begin{definition}
  For a PBES $\Epsilon$, a predicate environment $\theta$, and a data environment $\delta$, the tuple $\left< \Epsilon, \theta, \delta\right>$ is an \emph{interpreted PBES}.
  The \emph{solution} of an interpreted PBES 
  is a predicate environment $\interpreted{\Epsilon}\theta\delta$ determined by the interpretation defined as follows:
  \[
   \begin{array}{rcl}
	\interpreted{\emptyset}\theta\delta & = & \theta\\
	\interpreted{(\sigma X(d:D) = \varphi)\Epsilon}\theta\delta & = &
	 \interpreted{\Epsilon}\theta[\sigma T/X]\delta\\
   \end{array}
  \]
  where $\sigma \in \{\mu, \nu\}$ and $T : (\mathbb{D} \rightarrow \mathbb{B}) \rightarrow (\mathbb{D}
  \rightarrow \mathbb{B})$
  is the \emph{predicate transformer} defined by
  \[
  T = \lambda f\in \mathbb{B}^\mathbb{D}.\lambda v \in \mathbb{D}.
  \interpreted{\varphi}\left(\interpreted{\Epsilon}\theta[f/X]
  \delta\right)\delta[v/d].
  \]
 \end{definition}
 Note that the solution does not depend on the environments $\theta$ or $\delta$ if the system is closed.

 \begin{example} \label{eg:PBESsol}
  For $\Epsilon_2$ given in Example~\ref{eg:PBES},
    the solution 
$\interpreted{\Epsilon}\theta\delta$ is characterized such that
    $(\interpreted{\Epsilon}\theta\delta)(X_1)$
          (resp.\ $(\interpreted{\Epsilon}\theta\delta)(X_2)$)
    is a function that
    values $\mathbbold{t}$ if and only if an arbitrary natural number (resp.\ $0$) is given.
 \end{example}

The \emph{membership problem} for PBESs is a problem that answers whether $X(d)$ holds or not for a given interpreted PBES and $X$ and $d$.
In the sequel,
we explain proof graphs introduced in~\cite{Cranen2013} in order to characterize the membership problem.

For a PBES 
  \( \Epsilon\ =\
  \left(\sigma_1 X_1\left(d: D\right) = \varphi_1\right) \mathrel{\cdots}
  \left(\sigma_n X_n\left(d: D\right) = \varphi_n\right),
  \)
the \emph{rank} of $X_i$ ($1\leq i\leq n$) is the number of alternations of $\mu$ and $\nu$
in the sequence $\nu\sigma_1\cdots\sigma_n$.
Note that the rank of $X_i$ bound with $\nu$ is even
and the rank of $X_i$ bound with $\mu$ is odd.
For Example~\ref{eg:PBES}, $\rank_{\Epsilon_2}(X_1)=0$ and $\rank_{\Epsilon_2}(X_2)=1$.
\emph{Bound variables} are predicate variables $X_i$
that occur in the left-hand sides of equations in $\Epsilon$.
The set of bound variables are denoted by $\mathrm{bnd}(\Epsilon)$.
The \emph{signature} $\sig(\Epsilon)$ in $\Epsilon$ is defined by
$\sig(\Epsilon) = \left\{(X_i,v) \mid X_i \in \mathrm{bnd}(\Epsilon),\ v \in \mathbb{D} \right\}$.
We use $X_i(v)$ to represent $(X_i,v)\in \sig(\Epsilon)$.
We use the notation $u^{\bullet}$ for the post set $\{u' \in V \mid u \rightarrow u'\}$
 of a vertex $u$ in a directed graph $\left<V,\rightarrow\right>$.
 \begin{definition}\label{def:PG}
  Let $\left<\Epsilon, \theta, \delta\right>$ be an interpreted PBES,
  $V \mathrel{\subseteq} \sig(\Epsilon)$, ${\rightarrow} \mathrel{\subseteq} {V \times V}$, and $r \in \mathbb{B}$.
  If both of the following conditions hold for any $X_i(v) \in V$,
  the tuple $\left< V, \rightarrow, r \right>$ is called a \emph{proof graph} for the PBES.
  \begin{enumerate}
   \item\label{def:PG:1} 
	$\interpreted{\varphi _i} (\theta [\lnot r / \sig(\Epsilon)][r / X_i(v)^{\bullet}]) (\delta [v / d]) = r$
   \item\label{def:PG:2}
	For any infinite sequence $Z_0(x_0) \rightarrow Z_1(x_1) \rightarrow \cdots$ that begins from $X_i(v)$,
	the minimum rank of $Z^{\infty}$ is even,
	where $Z^{\infty}$ is the set of $Z_i$ that occurs infinitely often in the sequence. 
  \end{enumerate}
 \end{definition}
 The condition (\ref{def:PG:1}) says that $\varphi_i = r$ must hold if we assume that the successors of $X_i(v)$ are $r$ and
 the other signatures are $\lnot r$.

 We say that a proof graph $\left<V, \rightarrow, r\right>$ \emph{proves} $X_i(v) = r$
 if and only if $X_i(v) \in V$.
In the sequel, we assume $r = \mathbbold{t}$. 
   \begin{example} \label{eg:PG}
	 Consider the following graph and $\Epsilon_2$ in Example~\ref{eg:PBES}:
	\[
	\xymatrix{
	X_1(0) \ar[r] & X_1(1) \ar[r] & X_1(2) 
	\ar[r] & \cdots\\
	X_2(0) \ar[u]
	}
	\]
	 This graph is a proof graph proving $X_2(0)=\mathbbold{t}$, which is justified as follows.
	 We have that if $X_1(0) = \mathbbold{t}$ then $X_2(0) = \mathbbold{t}$,
	 and if $X_1(n+1) = \mathbbold{t}$ then $X_1(n) = \mathbbold{t}$ for any $n \geq 0$.
	 Therefore, this graph satisfies the condition (\ref{def:PG:1}) in Definition~\ref{def:PG}.
	 Moreover, $X_1$ occurs infinitely often in an infinite path in the graph and rank of $X_1$ is even.
	 Thus, the condition (\ref{def:PG:2}) is satisfied.
   \end{example}

   The next theorem states the relation between proof graphs and the membership
   problem on a PBES.
 \begin{theorem}[\cite{Cranen2013}]\label{thm:PBES2PG}
  For an interpreted PBES $\left<\Epsilon,\theta,\delta\right>$ and
  a $X_i(v) \in \sig (\Epsilon)$, 
  the existence of a proof graph $\left<V,\rightarrow,r\right>$ 
  such that $X_i(v) \in V$ coincides with
  $\interpreted{X_i(v)}\theta\delta = r$.
 \end{theorem}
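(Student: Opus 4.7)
The plan is to prove the two directions of the equivalence separately, assuming without loss of generality that $r = \mathbbold{t}$. Throughout, I would use the standard ordinal-approximation characterisation of the fixed-point semantics: for each bound predicate $X_j$, define a transfinite sequence $X_j^{\alpha}$ that starts from $\emptyset$ when $\sigma_j = \mu$ and from $\mathbb{D}$ when $\sigma_j = \nu$, iterates the corresponding monotone operator, takes unions or intersections at limits, and agrees with $\interpreted{X_j}\theta\delta$ at some closure ordinal. This reduces semantic membership to membership in some stage of the approximation hierarchy.

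For the soundness direction ($\Leftarrow$), given a proof graph $\langle V, \to, \mathbbold{t}\rangle$ containing $X_i(v)$, I would assign to each vertex $X_j(u) \in V$ a \emph{signature} consisting of a tuple of ordinals indexed by the $\mu$-ranks, recording the approximation stages at which $X_j(u)$ must be justified. The rank condition~(\ref{def:PG:2}) --- that the minimum rank occurring infinitely often on any infinite path is even --- is precisely what rules out infinite descent in these signature tuples under the lexicographic order, so a coherent signature assignment exists. Condition~(\ref{def:PG:1}) gives a local consistency that, when chained along the outgoing edges of $X_j(u)$, lets me prove by induction on the signature that $X_j(u)$ belongs to the appropriate approximation stage, and hence to $\interpreted{\Epsilon}\theta\delta(X_j)$.

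For the completeness direction ($\Rightarrow$), I would take $V = \{ X_j(u) \in \sig(\Epsilon) \mid \interpreted{X_j(u)}\theta\delta = \mathbbold{t}\}$ and define $\to$ by Skolem-style choices: for each $X_j(u) \in V$, the formula $\varphi_j$ evaluates to $\mathbbold{t}$ under $\interpreted{\Epsilon}\theta\delta$ and $\delta[u/d]$, and a recursive decomposition through $\wedge$, $\vee$, $\forall$, $\exists$ selects a set of predicate atoms $X_k(w)$ in $\varphi_j$ that together witness this truth when each chosen atom is assigned $\mathbbold{t}$ and every other signature $\mathbbold{f}$. Edges from $X_j(u)$ point exactly to those witnesses; condition~(\ref{def:PG:1}) is then immediate from the construction. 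For condition~(\ref{def:PG:2}), the choices must be \emph{rank-respecting}: when $X_j$ is a $\mu$-variable, pick witnesses coming from strictly earlier approximation stages of the appropriate component, so that infinite descent on any $\mu$-variable's stage ordinal is impossible and hence the minimum rank repeated infinitely often along any infinite path cannot be odd.

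The main obstacle is this last coordination in the completeness direction: producing \emph{locally} valid edges is routine, but enforcing the \emph{global} parity condition requires the witness selection at each vertex to be made consistently with the ordinal approximants of all $\mu$-ranks simultaneously. This is essentially a parity-game argument, in which the proof graph plays the role of a positional winning strategy for the even player on the dependency structure; the existence of such a strategy, under the assumption that the solution assigns $\mathbbold{t}$ to $X_i(v)$, can be obtained either via positional determinacy of parity games on countable arenas or directly via a simultaneous induction on the tuple of $\mu$-approximation stages.
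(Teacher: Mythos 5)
The paper offers no proof of this statement: Theorem~\ref{thm:PBES2PG} is imported directly from \cite{Cranen2013} (note the citation in the theorem header), so there is nothing in-paper to compare your argument against. That said, your sketch follows the standard route used in the cited source and in the broader literature on local fixed-point checking: for soundness, ordinal approximants of the $\mu$- and $\nu$-stages together with lexicographic signature tuples, where the parity condition on infinitely recurring ranks is exactly what forbids infinite descent; for completeness, a Skolem-style selection of witnessing atoms from each right-hand side, with rank-respecting (stage-decreasing) choices at $\mu$-variables. The plan is correct in shape. The one place to be careful is the step you yourself flag: in the completeness direction, making the witness choices \emph{globally} coherent so that no odd rank recurs infinitely often is where essentially all of the work lies, and an appeal to positional determinacy or to a simultaneous transfinite induction over the tuple of $\mu$-stages is the right tool but is stated here rather than carried out. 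Also note that for $\forall$-subformulas a single vertex may need infinitely many successors, which Definition~\ref{def:PG} permits; your decomposition should make that explicit. As a proof \emph{plan} this is faithful to how the result is actually established; as a proof it still owes the signature-coherence argument in full.
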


\section{Reduced Proof Graphs}\label{sec:ePG}
This section extends proof graphs, called \emph{reduced proof graphs},
in which each vertex is a set of vertices with the same predicate symbol
in the original proof graphs. 
We write a vertex as $X_i(C)$, which stands for $\{X_i(v) \mid v \in C \subseteq \mathbb{D}\}$.
We begin with an example.
\begin{example}\label{eg:ePG}
 A reduced proof graph for $\Epsilon_2$ in Example~\ref{eg:PBES} is shown as follows:
 \[
 \xymatrix{
   X_1(\{0\}) \ar[r] & X_1(\{d \mid d \geq 1\}) \ar@(rd,ld) \\
   X_2(\{0\}) \ar[u] &
 }
 \]
 The vertices $X_1(\{0\})$ and $X_2(\{0\})$ naturally correspond to
 $X_1(0)$ and $X_2(0)$ in Example~\ref{eg:PG}, respectively.
 On the other hand, the vertex $X_1(\{d \mid 1 \leq d\})$ represents
 the infinite set of vertices $\{X_1(1), X_1(2), \ldots\}$.
\end{example}

For consistency, an edge from a vertex $X_i(C)$ to a vertex $X_j(C')$
is allowed, if for any $v\in C$ there exists $v'\in C'$ such that the
edge from $X_i(v)$ to $X_j(v')$ meets the condition (\ref{def:PG:1}) in
Definition~\ref{def:PG}.  This is the main difference with the original definition.

In the rest of this paper, we focus on a restricted class of PBESs,
where the graph construction in Section~\ref{sec:semi-alg} makes sense under such a restriction.
\begin{definition}\label{def:dq-free-dis}
A closed PBES is \emph{data-quantifier free and disjunctive} if it is in the following forms:
 \[
 \begin{array}[t]{rcl}
  \sigma_1 X_1(d: D)& = & {\displaystyle \bigvee_{1 \leq k \leq m_1}}
   \Bigl( \varphi_{1k}(d) \wedge X_{a_{1k}}\bigl(f_{1k}(d)\bigr) \Bigr) \\
  & \vdots &\\
  \sigma_n X_n(d: D)& = & {\displaystyle \bigvee_{1 \leq k \leq m_n}}
   \Bigl( \varphi_{nk}(d) \wedge X_{a_{nk}}\bigl(f_{nk}(d)\bigr) \Bigr) \\
 \end{array}
 \]
where
$f_{ik}(d)$ is a data expression possibly containing variable $d$,
and $\varphi_{ik}(d)$ is a predicate formula, defined by the grammar
$\varphi ::= b \ |\ \forall d'{:}D\ \varphi \ |\ \exists d'{:}D\ \varphi$,
containing no free variables except for $d$.
\end{definition}
$\Epsilon_2$ in Example~\ref{eg:PBES} is data-quantifier free and disjunctive.
This class is a subclass of disjunctive PBESs introduced in~\cite{Koolen2015}.
In disjunctive PBESs, the right hand sides of the equations are in the following form:
\[
  {\displaystyle \bigvee_{1 \leq k \leq m_n}} \exists e : E_k
  \Bigl( \varphi_{ik}(d,e) \wedge X_{a_{ik}}\bigl(f_{ik}(d,e)\bigr) \Bigr)
\]
Here, the value of $f_{ik}(d,e)$ satisfying $\varphi_{ik}(d,e)$ varies according to the value of $e$ for a parameter $d$.
From the restriction ``data-quantifier free'', we get the unique $f_{ik}(d)$
for a parameter $d$.
We use this fact to argue reduced proof graphs.
  For closed PBESs, we abbreviate $\interpreted{\Epsilon}\theta\delta$ as $\interpreted{\Epsilon}$.

 PBESs in the subclass inherit the following important property that holds for the disjunctive \linebreak PBESs~\cite{Koolen2015}.
 \begin{proposition}\label{prop:PG-loop}
  For a data-quantifier free and disjunctive PBES $\Epsilon$ and an $X_i(v) \in \sig(\Epsilon)$, 
  the property $\interpreted{\Epsilon}(X_i)(v) = \mathbbold{t}$ coincides with
  the existence of a proof graph 
  such that $X_i(v) \in V$ and $|w^{\bullet}| \mathrel{=} 1$
  for every vertex $w$ in the graph.\footnote{
It is stated that $|w^{\bullet}| \leq 1$ in~\cite{Koolen2015}.  
  The equality is, however, easily derived from the disjunctivity.
  If $w$ is in a proof graph, then $|w^{\bullet|} \geq 1$ must hold
  from the form of disjunctive PBES
  to satisfy the condition~(\ref{def:PG:1}) of the proof graph.
  }
 \end{proposition}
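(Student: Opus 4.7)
The $(\Leftarrow)$ direction is immediate: any graph satisfying $|w^\bullet| = 1$ for every vertex $w$ is in particular a proof graph containing $X_i(v)$, so $\interpreted{\Epsilon}(X_i)(v) = \mathbbold{t}$ follows from Theorem~\ref{thm:PBES2PG}. The real work is in $(\Rightarrow)$, which I plan to establish by pruning an arbitrary proof graph (obtained from Theorem~\ref{thm:PBES2PG}) down to one in which every vertex has exactly one outgoing edge.

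Concretely, start from a proof graph $\langle V, \rightarrow, \mathbbold{t}\rangle$ with $X_i(v) \in V$, and select for each vertex $X_j(u) \in V$ a ``witness disjunct'' $k(j,u)$ as follows. Condition~(\ref{def:PG:1}) requires $\bigvee_k (\varphi_{jk}(u) \wedge X_{a_{jk}}(f_{jk}(u)))$ to evaluate to $\mathbbold{t}$ under the environment that sends members of $X_j(u)^\bullet$ to $\mathbbold{t}$ and all other signatures to $\mathbbold{f}$, so at least one disjunct must be true. In the data-quantifier free setting, $f_{j,k(j,u)}(u)$ denotes a single data value, so the true disjunct supplies a unique successor $X_{a_{j,k(j,u)}}(f_{j,k(j,u)}(u)) \in X_j(u)^\bullet$. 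Let $\rightarrow'$ consist of exactly these chosen edges and let $V'$ be the set of vertices reachable from $X_i(v)$ along $\rightarrow'$; the candidate pruned graph is $\langle V', \rightarrow' \cap (V' \times V'), \mathbbold{t}\rangle$, which by construction has out-degree exactly one at every vertex.

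It remains to verify the two proof-graph conditions for this candidate. For~(\ref{def:PG:1}), the chosen $k(j,u)$-th disjunct remains true under the pruned environment because its predicate-variable atom is exactly the preserved successor and its Boolean part $\varphi_{j,k(j,u)}(u)$ is independent of the predicate environment; hence the full disjunction still evaluates to $\mathbbold{t}$. For~(\ref{def:PG:2}), every infinite $\rightarrow'$-path is also an infinite $\rightarrow$-path in the original proof graph, so the parity condition on minimum infinite ranks is inherited verbatim.

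The main subtlety is precisely the check that pruning does not break condition~(\ref{def:PG:1}), and it is here that disjunctivity is essential: in a disjunction it suffices to retain a single true disjunct, whereas for a non-disjunctive right-hand side one would in general need to preserve several successors and the same construction would fail. Beyond that, I do not anticipate any technical obstacle.
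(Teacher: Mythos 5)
Your argument is correct, but it is worth pointing out that the paper does not actually prove Proposition~\ref{prop:PG-loop} at all: it imports the $|w^{\bullet}| \leq 1$ version from~\cite{Koolen2015} as a known property of disjunctive PBESs, and the only original content is the footnote arguing that $|w^{\bullet}| \geq 1$ is forced because condition~(\ref{def:PG:1}) of Definition~\ref{def:PG}, applied to a disjunctive right-hand side, requires at least one true disjunct and hence at least one successor. Your proposal instead gives a self-contained proof: the $(\Leftarrow)$ direction by forgetting the out-degree constraint and applying Theorem~\ref{thm:PBES2PG}, and the $(\Rightarrow)$ direction by a pruning argument --- pick one true disjunct per vertex, keep only that edge, restrict to the reachable part, and check that condition~(\ref{def:PG:1}) survives because the retained disjunct's predicate atom is the retained successor and $\varphi_{j,k(j,u)}$ is predicate-environment-independent, while condition~(\ref{def:PG:2}) is inherited since $\rightarrow'\subseteq\rightarrow$. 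This is sound; you correctly identify that shrinking $X_j(u)^{\bullet}$ only threatens condition~(\ref{def:PG:1}) by falsifying more atoms, which disjunctivity tolerates, and the data-quantifier-free restriction guarantees the chosen successor is a single well-defined signature. What your route buys is independence from~\cite{Koolen2015}; what it costs is redundancy with the paper's own later development, since essentially the same disjunct-selection reasoning reappears in the proof of Lemma~\ref{lem:PG}, which the paper structures so that the out-degree-one normalisation is delegated entirely to this cited proposition.
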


For data-quantifier free and disjunctive PBESs, we can reformulate the proof graphs as in the following lemma.
\begin{lemma} \label{lem:PG}
 Let $\Epsilon$ be a data-quantifier free and disjunctive PBES, and
 $G = \left<V,{\rightarrow}\right>$ be a graph with $V \subseteq
 \mathrm{sig}(\Epsilon)$ and ${\rightarrow}
 \subseteq V \times V$.
 If $G$ is a proof graph that proves $X(v)=\mathbbold{t}$,
 then there exists a proof graph that proves $X(v)=\mathbbold{t}$ 
 and satisfies all of the following conditions:
 \begin{enumerate}
  \item \label{lem:PG:1} Each vertex has exactly one
    out-going edge from it.
  \item \label{lem:PG:2} For any $(X_i(v), X_j(v'))\in{\to}$,
    there exists $k \in \mathbb{N}$ such that $j=a_{ik}$, $f_{ik}(v)
    = v'$, and $\varphi_{ik}(v) = \mathbbold{t}$.
  \item \label{lem:PG:3} For any infinite sequence $Z_0(v_0)
    \rightarrow Z_1(v_1) \rightarrow \cdots$ along the graph, the
    minimum rank of $Z^{\infty}$ is even, where $Z^{\infty}$ is the
    set of $Z_i$ that occurs infinitely often in the sequence.
 \end{enumerate}
 Conversely, if $G$ satisfies all of these conditions, 
 then it is a proof graph for $\Epsilon$. 
 \end{lemma}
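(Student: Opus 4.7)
The plan is to handle the two directions separately, using Proposition~\ref{prop:PG-loop} for the forward direction and direct verification for the converse.

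For the forward direction, I would first apply Proposition~\ref{prop:PG-loop} to replace the given proof graph by a proof graph $G' = \langle V', {\to'}, \mathbbold{t}\rangle$ that still proves $X(v) = \mathbbold{t}$ and in which every vertex has exactly one outgoing edge; this immediately gives condition~(\ref{lem:PG:1}). For condition~(\ref{lem:PG:2}), I would fix any edge $X_i(v) \to X_j(v')$ in $G'$ and expand the proof-graph condition (\ref{def:PG:1}) of Definition~\ref{def:PG} at $X_i(v)$. Because $\Epsilon$ is data-quantifier free and disjunctive, $\varphi_i$ has the shape $\bigvee_k \varphi_{ik}(d) \wedge X_{a_{ik}}(f_{ik}(d))$; evaluated at $d = v$ under the environment that sets $X_j(v')$ to $\mathbbold{t}$ and every other signature to $\mathbbold{f}$, the whole disjunction equals $\mathbbold{t}$, so at least one disjunct is $\mathbbold{t}$. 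Since a disjunct can be true only if its $X$-atom equals the unique successor, there must exist $k$ with $a_{ik} = j$, $f_{ik}(v) = v'$, and $\varphi_{ik}(v) = \mathbbold{t}$, which is exactly condition~(\ref{lem:PG:2}). Condition~(\ref{lem:PG:3}) is a direct rephrasing of condition~(\ref{def:PG:2}) of Definition~\ref{def:PG}, since any infinite sequence along $G'$, once we pick its initial vertex, is also a sequence starting from some element of $V'$.

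For the converse, assume $G$ satisfies (\ref{lem:PG:1})--(\ref{lem:PG:3}); I would verify the two conditions of Definition~\ref{def:PG}. Fix $X_i(v) \in V$ and let $X_j(v')$ be its unique successor supplied by (\ref{lem:PG:1}). By (\ref{lem:PG:2}) there exists $k$ with $a_{ik} = j$, $f_{ik}(v) = v'$, and $\varphi_{ik}(v) = \mathbbold{t}$; under the environment $\theta[\mathbbold{f}/\sig(\Epsilon)][\mathbbold{t}/X_i(v)^{\bullet}]$ the $k$-th disjunct $\varphi_{ik}(v) \wedge X_{a_{ik}}(f_{ik}(v))$ evaluates to $\mathbbold{t} \wedge \mathbbold{t}$, hence so does $\varphi_i$; this gives (\ref{def:PG:1}). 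Condition~(\ref{def:PG:2}) is just (\ref{lem:PG:3}).

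The only genuine content is the passage from the semantic statement in (\ref{def:PG:1}) to the syntactic witness $k$ in (\ref{lem:PG:2}); this is where the combination of \emph{disjunctive} form (so that truth forces one disjunct to be true) and \emph{data-quantifier free} form (so that $f_{ik}(d)$ and $\varphi_{ik}(d)$ depend only on $d$, not on a quantified witness $e$) is essential. Once this extraction is in place, the rest of the proof is bookkeeping about which vertices are in $V$ and which environment is being used.
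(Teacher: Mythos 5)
Your proposal is correct and follows essentially the same route as the paper's proof: Proposition~\ref{prop:PG-loop} for condition~(\ref{lem:PG:1}), extraction of the witness $k$ from condition~(\ref{def:PG:1}) under the environment $\theta[\mathbbold{f}/\sig(\Epsilon)][\mathbbold{t}/X_i(v)^{\bullet}]$ using the disjunctive, data-quantifier-free shape of $\varphi_i$ for condition~(\ref{lem:PG:2}), and direct verification for the converse. No gaps.
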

 \begin{proof}
  Let $G$ be a proof graph.
  Then, by Proposition~\ref{prop:PG-loop}, there exists a proof graph $G'=\left<V',{\rightarrow}'\right>$
  that satisfies the condition (\ref{lem:PG:1}).  
  To prove the condition~(\ref{lem:PG:2}) of the lemma,
  assume $(X_i(v),X_j(v'))\in {\to}'$.
  Then, from the condition~(\ref{def:PG:1}) of the proof graph,
  we obtain
  \(
  \interpreted{\varphi_{i}}\theta[\mathbbold{f} / \sig(\Epsilon)][\mathbbold{t} / X_{i}(v)^{\bullet}] \delta [v / d] = \mathbbold{t}
  \)
  for the right-hand side $\varphi_i$ of $X_i$.
  From the definition of data-quantifier free and disjunctive PBESs, it follows that there exists $k$ such that $\varphi_{ik}(v)=\mathbbold{t}$ and $X_{a_{ik}}(f_{ik}(v)) \in X_{i}(v)^{\bullet}$.
  Thus, the condition (\ref{lem:PG:2}) of the lemma holds.
  The condition (\ref{lem:PG:3}) for $G'$ is immediate from the condition~(\ref{def:PG:2}) of the proof graph.

  Next, let $G$ satisfy the conditions of the lemma.
  Consider the condition~(\ref{def:PG:1})
  in the definition of proof graphs for $X_{i}(v) \in V$.
  From the condition (\ref{lem:PG:1}) of the lemma,
  we have an edge $(X_i(v),X_j(v')) \in {\to}$ for some $X_j(v')\in V$.
  From the condition (\ref{lem:PG:2}) of the lemma, there exists $k$ such that $j=a_{ik}$, $f_{ik}(v) = v'$, and $\varphi_{ik}(v) = \mathbbold{t}$.
The condition (\ref{def:PG:2}) of the proof graph follows from the condition (\ref{lem:PG:3}) of the lemma.
  Thus, we can conclude that $G$ is a proof graph.
  \QED
  \end{proof}

Now, we define reduced proof graphs for data-quantifier free and disjunctive
PBESs.
\begin{definition} \label{def:ePG}
For a data-quantifier free and disjunctive PBES $\Epsilon$,
 a directed graph $G = \<V,{\to}>$ with $V \subseteq \mathrm{bnd}(\Epsilon) \times 2^{\mathbb{D}}$
 and ${\to} \subseteq V \times V$ is a \emph{reduced proof graph} if and only if it satisfies all of the following conditions:
 \begin{enumerate}
  \item 
    \label{def:ePG-terminal} Each vertex has exactly one
    out-going edge from it.
  \item \label{def:ePG-succ} For any $(X_i(C), X_j(C'))\in{\to}$,
    there exists $k \in \mathbb{N}$ such that $j=a_{ik}$, $f_{ik}(C)
    \subseteq C'$, and $\varphi_{ik}(v) = \mathbbold{t}$ for any $v
    \in C$.
  \item \label{def:ePG-loop} For any infinite sequence $Z_0(C_0)
    \rightarrow Z_1(C_1) \rightarrow \cdots$ along the graph, the
    minimum rank of $Z^{\infty}$ is even, where $Z^{\infty}$ is the
    set of $Z_i$ that occurs infinitely often in the sequence.
 \end{enumerate}
\end{definition}
We say that a reduced proof graph $G$ \emph{proves} $X_i(v)=\mathbbold{t}$
if and only if there exists some vertex $X_i(C) \in V$ such that $v \in C$.
We can show the relationship between reduced proof graphs and (normal) proof graphs.
 \begin{lemma}\label{lem:ePG2PG}
  For a data-quantifier free and disjunctive PBES and $X(v) \in \sig(\Epsilon)$,
  the existence of a proof graph that proves $X(v)=\mathbbold{t}$
  coincides with the existence of a reduced proof graph that proves $X(v)=\mathbbold{t}$.
 \end{lemma}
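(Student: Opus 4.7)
The plan is to prove both directions by leveraging Lemma~\ref{lem:PG}, whose three conditions structurally mirror those of Definition~\ref{def:ePG} and thus enable a clean back-and-forth translation between proof graphs and reduced proof graphs. For the $(\Rightarrow)$ direction, given a proof graph $G = \<V, {\rightarrow}>$ proving $X(v) = \mathbbold{t}$ in the form guaranteed by Lemma~\ref{lem:PG}, I would lift each vertex $X_i(w) \in V$ to the singleton vertex $X_i(\{w\})$ and transcribe the edges accordingly. Each condition of Definition~\ref{def:ePG} then specialises, on singleton sets, to the corresponding condition of Lemma~\ref{lem:PG}: $f_{ik}(\{w\}) \subseteq \{w'\}$ collapses to the pointwise requirement, and the rank condition is preserved since predicate labels are unchanged. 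Hence the resulting graph is a reduced proof graph proving $X(v) = \mathbbold{t}$.

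For the converse $(\Leftarrow)$ direction, suppose $G' = \<V', {\rightarrow}'>$ is a reduced proof graph with some $X(C_0) \in V'$ satisfying $v \in C_0$. I would construct a proof graph in the sense of Lemma~\ref{lem:PG} by unrolling $G'$ along concrete data values. For every vertex $X_i(C) \in V'$ with its unique outgoing edge $X_i(C) \rightarrow' X_j(C')$, condition~(\ref{def:ePG-succ}) of Definition~\ref{def:ePG} yields a single index $k$ such that $\varphi_{ik}(w) = \mathbbold{t}$ and $f_{ik}(w) \in C'$ for every $w \in C$; fix one such $k$ per edge of $G'$, and for each signature $X_i(w)$ that is visited, fix some cluster $X_i(C_w) \in V'$ containing $w$. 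Beginning from $X(v)$, iteratively add the edge $X_i(w) \rightarrow X_{a_{ik}}(f_{ik}(w))$ to obtain a graph $G = \<V, {\rightarrow}>$ that satisfies conditions~(\ref{lem:PG:1}) and (\ref{lem:PG:2}) of Lemma~\ref{lem:PG} by construction.

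The main obstacle is verifying condition~(\ref{lem:PG:3}) of Lemma~\ref{lem:PG}: every infinite path along $G$ must have even minimum rank among the predicates occurring infinitely often. The key observation is that any infinite path $X_{i_0}(w_0) \rightarrow X_{i_1}(w_1) \rightarrow \cdots$ in $G$ projects, via the cluster choice above, to an infinite sequence $X_{i_0}(C_{w_0}) \rightarrow' X_{i_1}(C_{w_1}) \rightarrow' \cdots$ in $G'$, since each edge of $G$ was produced by the uniquely chosen outgoing edge of a cluster of $G'$. Because the predicate symbols at corresponding positions coincide, the two sequences share the same $Z^{\infty}$, so condition~(\ref{def:ePG-loop}) of Definition~\ref{def:ePG} for $G'$ directly delivers condition~(\ref{lem:PG:3}) for $G$. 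Invoking Lemma~\ref{lem:PG}, $G$ is a proof graph proving $X(v) = \mathbbold{t}$, completing the argument.
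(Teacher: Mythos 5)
Your proposal is correct and follows essentially the same route as the paper: lifting each vertex $X_i(w)$ to the singleton $X_i(\{w\})$ for one direction, and unrolling the reduced proof graph into a concrete infinite path (choosing $v_0 = v$ and $v_{m+1} = f_{\ell_m k}(v_m)$ with $k$ supplied by condition~(\ref{def:ePG-succ})) for the other, transferring the rank condition because the predicate symbols agree position by position. The only point to make explicit is that the cluster $C_w$ you assign to each visited signature must be the \emph{target} cluster of the edge just traversed (the paper sidesteps this by simply following a fixed path in the reduced graph and picking representatives along it), not an arbitrary element of $V'$ containing $w$; with an inconsistent choice the sequence $X_{i_0}(C_{w_0}) \rightarrow' X_{i_1}(C_{w_1}) \rightarrow' \cdots$ need not actually be a path in $G'$, and condition~(\ref{def:ePG-loop}) could not then be invoked.
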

 \begin{proof}
  By Lemma~\ref{lem:PG}, there exists a proof graph that satisfies all
  of the conditions in Lemma~\ref{lem:PG}.
  The proof graph is transformed into a reduced one
  by replacing each vertex $X(w)$ with $X(\{w\})$.
  Then, it is trivial
  that the obtained graph is a reduced proof graph that proves $X(v)=\mathbbold{t}$.

  Next, we give a construction of a proof graph $G'$ that proves $X(v)=\mathbbold{t}$ from a given reduced proof graph $G$.
  There exists an infinite path $\pi$ in $G$ starting from $X(C_0)$ such that $v \in C_0$
  from the condition~(\ref{def:ePG-terminal}) in Definition~\ref{def:ePG}.
  Let $\pi$ be the following sequence:
  \[
  \pi:\ X_{\ell_0}(C_0) \rightarrow X_{\ell_1}(C_1) \rightarrow \cdots
  \]
  for some sequence $\ell_0,\ell_1,\ldots$ such that $X_{\ell_0} = X$.
  We construct a sequence
  \[
  \pi':\ X_{\ell_0}(v_0) \rightarrow X_{\ell_1}(v_1) \rightarrow \cdots
  \]
  by choosing $v_m$ from $C_m$ as follows:
  \begin{itemize}
   \item $v_0 = v$
   \item $v_m = f_{\ell_{m-1}k}(v_{m-1})$ for $k$ determined by Definition~\ref{def:ePG}~(\ref{def:ePG-succ}).
  \end{itemize}
  Then, we can regard $\pi'$ as a graph $G'$.
  Since it is easy to show that $G'$ satisfies the conditions in Lemma~\ref{lem:PG}, the obtained graph $G'$ is a proof graph that proves $X(v)=\mathbbold{t}$ by Lemma~\ref{lem:PG}.  
  \QED
  \end{proof}

  From Theorem~\ref{thm:PBES2PG} and Lemma~\ref{lem:ePG2PG}, the membership problem for data-quantifier free and disjunctive PBESs
  is reduced to the problem finding a reduced proof graph.
  Moreover, there exists an instance of the membership problem having a finite reduced proof graph 
  but no finite proof graph as shown in Examples~\ref{eg:PG} and \ref{eg:ePG}.
  In Example~\ref{eg:depsp}, we will show that there exists no finite proof graph for $\Epsilon_2$ by using its dependency space introduced in Section~\ref{sec:construction}.

\section{Dependency Spaces}\label{sec:construction}
In this section, we extend the notion of dependency spaces~\cite{Koolen2015} for reduced proof graphs.
Before proceeding, we recall the notion of congruence on algebra, which we use in this section.

Let $\mathcal{A}=\< A, F^A>$ be a pair such that 
 \begin{itemize}
  \item $A$ is a non-empty set, called \emph{carrier}, and
  \item $F^A$ is a set of partial functions $\alpha^A : A \rightarrow A$.
 \end{itemize}
Then $\mathcal{A}$ is called a \emph{partial algebra}.
An equivalence relation ${\equiv}$ $({}\subseteq A\times A)$
is \emph{congruent}, if the following conditions hold for any $\alpha^{A}\in F^{A}$ and $a, b \in A$ satisfying $a \equiv b$:
  \begin{enumerate}
   \item \label{item:cong1}
   $\alpha^{A}(a)$ is defined if and only if $\alpha^{A}(b)$ is defined, and
   \item \label{item:cong2}
   if $\alpha^{A}(a)$ is defined, then $\alpha^{A}(a) \equiv \alpha^{A}(b)$.
  \end{enumerate}
 The \emph{quotient algebra} of $\mathcal{A}$ with respect to 
 a congruence relation $\equiv$, denoted by $\mathcal{A}/{\equiv}$,
 is the algebra $\mathcal{B}=\< A/{\equiv}, F^B >$, where
 $F^B$ consists of the following functions $\alpha^{B}$ for every $\alpha^A \in F^A$:
		\[
		\alpha^{B}(\eqclass{a}) = \left\{
		\begin{array}{ll}
		 \eqclass{\alpha^A(a)}, & \text{if $\alpha^A(a)$ is defined}\\
		 \text{undefined,} & \text{otherwise}
		\end{array}
		\right.
		\]
 Note that $\eqclass{a}$ denotes the equivalence class containing $a$.

 A reduced dependency space includes at least one reduced proof graph if it exists.
Then, we can construct a reduced proof graph by deleting vertices and edges from a reduced dependency space so that
it satisfies all of the conditions of Definition~\ref{def:ePG}.

\begin{definition}\label{def:indA}
 For a partial algebra $\mathcal{A}=\< A, F^A >$, the \emph{graph induced from} $\cal{A}$ is defined as
 the directed graph $\< A, {\rightarrow}>$, where
 \[
 {\rightarrow}=
 \{(u,\alpha^{A}(u)) \mid u\in A,\; \alpha^{A}\in F^A,\; \alpha^{A}(u) \text{ is defined} \}.
 \]
\end{definition}
A congruence on the dependency space for a given PBES determines a reduced dependency space.
\begin{definition}
The \emph{dependency space} of 
a given data-quantifier free and disjunctive PBES $\Epsilon$ (we use the notation of Definition~\ref{def:dq-free-dis}), is the graph induced from the following partial algebra
$\mathcal{A} = \< A, F^A>$, where
\begin{itemize}
  \item $A = \{ X_i(v) \mid v\in \mathbb{D},\; i\in\{1,\ldots,n\} \}$, and
  \item $F^A = \{ \alpha^{A}_{ik} \mid i\in\{1,\ldots,n\},\; k\in\{1,\ldots,m_i\}\}$, where
    \[
    \alpha_{ik}^{A}(X_j(v)) = \left\{
    \begin{array}{ll}
      X_{a_{ik}}(f_{ik}(v)), & \text{if $i=j$ and $\varphi_{ik}(v) = \mathbbold{t}$}\\
      \text{undefined}, & \text{otherwise}\\
    \end{array}
    \right.
    \]
\end{itemize}

Moreover, the graph induced from $\mathcal{A}/{\equiv}$ for a congruence relation ${\equiv}$ with respect to $\mathcal{A}$ is a \emph{reduced dependency space} of PBES $\Epsilon$.
\end{definition}
Note that the equivalence classes are not always finite.

 \begin{example}\label{eg:depsp}
  The algebra $\mathcal{A}$ for the PBES $\Epsilon_2$ in Example~\ref{eg:PBES} is $\<A,\{\alpha^A_{11},\alpha^A_{12},\alpha^A_{21},\alpha^A_{22}\}>$, where
  \[\begin{array}{l}
     A = \{X_1(v),X_2(v) \mid v\in\mathbb{N}\},\\
     \alpha_{11}(X_1(v)) = X_1(v+1) \text{\quad for any $v\in\mathbb{N}$},\\
     \alpha_{12}(X_1(v)) = X_2(v) \text{\quad if $v \geq 1$},\\
     \alpha_{21}(X_2(v)) = X_2(v+1) \text{\quad for any $v\in\mathbb{N}$},\\
     \alpha_{22}(X_2(v)) = X_1(v) \text{\quad if $v = 0$}.
  \end{array}\]
  This is illustrated as follows:
 \[
 \xymatrix{
 X_1(0) \ar[r]^(0.5){\alpha_{11}}
       & X_1(1) \ar[r]^(0.5){\alpha_{11}} \ar[d]^{\alpha_{12}}
       & X_1(2) \ar[r]^(0.5){\alpha_{11}} \ar[d]^{\alpha_{12}}
       & \cdots\\
 X_2(0) \ar[u]^{\alpha_{22}} \ar[r]^(0.5){\alpha_{21}}
       & X_2(1) \ar[r]^(0.5){\alpha_{21}}
       & X_2(2) \ar[r]^(0.5){\alpha_{21}}
       & \cdots
 }
 \]
 The dependency space induced from $\mathcal{A}$ is the graph obtained from the above graph
 by removing function symbols on the edges.

  Remark that the dependency space contains every proof graph as a sub-graph for a disjunctive PBES~\cite{Koolen2015}.
  Because a proof graph must have exactly one out-going edge, it is trivial that there exists no finite proof graph for $\Epsilon_2$.
  
  Let ${\equiv}$ be a congruence relation described below.
  \[\begin{array}{l}
   X_1(v) \equiv X_1(w) \text{\quad for $v$ and $w$ such that $v \geq 1 \wedge w \geq 1 $}\\
   X_2(v) \equiv X_2(w) \text{\quad for $v$ and $w$ such that $v \geq 1 \wedge w \geq 1$}
  \end{array}\]
  Then, the carrier $A/{\equiv}$ of the quotient algebra $\mathcal{A}/{\equiv}$ is
  $\{\{X_1(0)\}, \{X_1(v)\mid v\geq 1\},
    \{X_2(0)\}, \{X_2(v)\mid v\geq 1\}\}$.
 The following graph is the reduced dependency space induced from the quotient algebra:
 \[
 \xymatrix{
 \{X_1(0)\} \ar[r]^(0.4){\alpha_{11}}  
       & \{X_1(v)\mid v\geq 1\} \ar[d]^{\alpha_{12}} \ar@(ul,ur)^{\alpha_{11}} \\
 \{X_2(0)\} \ar[u]^{\alpha_{22}} \ar[r]^(0.4){\alpha_{21}}
       & \{X_2(v)\mid v\geq 1\} \ar@(ld,rd)_{\alpha_{21}}
 }
 \]
Note that the vertices are also written as 
$X_1(\{0\}), X_1(\{1,2,\ldots\}), X_2(\{0\}), X_2(\{1,2,\ldots\})$,
respectively.
From this dependency space, we can easily extract the reduced proof graph in Example~\ref{eg:ePG}.
\end{example}
  Hereafter, we use the above notation to describe a reduced dependency space.
 \begin{example}
  Consider the data-quantifier free and disjunctive PBES $\Epsilon_3$ given as follows:
  \[
  \nu X(d: N) = \left(d \bmod 3 < 2 \wedge X\left(d+1\right)\right)\; \vee\; \left(d \bmod 3 = 1\ \wedge\ X\left(d+2\right)\right)
  \]
  The following graph is a reduced dependency space of $\Epsilon_3$:
  \[
  \xymatrix{
  X(N_0) \ar@<-0.5ex>[r] & X(N_1) \ar@<-0.5ex>[l] \ar[r] & X(N_2) \\
  }
  \]
  where $N_i = \{n \mid n \bmod 3 = i\}$ for each $i \in \{0,1,2\}$.
  This reduced dependency space includes a reduced proof graph shown below:
  \[
  \xymatrix{
  X(N_0) \ar@<-0.5ex>[r] & X(N_1) \ar@<-0.5ex>[l]
  }
  \]
  We can see $\interpreted{\Epsilon_3}(X)(d) = \mathbbold{t}$ iff
  $d \in N_0 \cup N_1$ from the reduced proof graph.
 \end{example}

We show a property of dependency spaces.
\begin{lemma}\label{lem:depsp}
 Let $S$ be a reduced dependency space for a data-quantifier free and disjunctive PBES $\Epsilon$.
 If there exists
 a proof graph that proves $X(v) = \mathbbold{t}$ for $\Epsilon$, 
 then there exists a sub-graph of $S$ that is a reduced proof graph proving $X(v) = \mathbbold{t}$.
\end{lemma}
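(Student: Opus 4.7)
The plan is to exhibit the desired reduced proof graph by projecting a proof graph furnished by the hypothesis onto the quotient algebra and verifying the three conditions of Definition~\ref{def:ePG}. By Theorem~\ref{thm:PBES2PG} combined with Lemma~\ref{lem:PG} (equivalently, Proposition~\ref{prop:PG-loop}), I start from a proof graph $G = \langle V, \to \rangle$ with $V \subseteq \sig(\Epsilon)$ in which each vertex has exactly one outgoing edge and every edge $u \to u'$ with $u = X_i(w)$ arises from a specific $\alpha^A_{ik}$ via $u' = X_{a_{ik}}(f_{ik}(w))$ together with $\varphi_{ik}(w) = \mathbbold{t}$. Let $\pi \colon v_0 = X(v),\, v_1,\, v_2, \ldots$ denote the unique infinite $G$-path starting at $X(v)$.

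Next, I would construct a candidate sub-graph $G' = \langle V', \to' \rangle$ of $S$ by taking $V' = \{[v_m]_\equiv \mid m \ge 0\}$ and, for each class $[u] \in V'$, selecting a representative of the form $v_{m_u}$ with $[v_{m_u}]_\equiv = [u]$; the outgoing edge of $[u]$ in $G'$ is then $[u] \to [v_{m_u+1}]_\equiv$. This edge lies in $S$ because the edge $v_{m_u} \to v_{m_u+1}$ in $G$ corresponds to some $\alpha^A_{ik}$, and the quotient algebra yields $\alpha^B_{ik}([v_{m_u}]_\equiv) = [v_{m_u+1}]_\equiv$; moreover $[v_{m_u+1}]_\equiv \in V'$, so $G'$ is self-contained. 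Condition~(\ref{def:ePG-terminal}) of Definition~\ref{def:ePG} is immediate from the construction, and condition~(\ref{def:ePG-succ}) follows from the congruence property: if $v_{m_u} = X_i(w)$ and the $G$-edge $v_{m_u} \to v_{m_u+1}$ uses $\alpha^A_{ik}$ with $\varphi_{ik}(w) = \mathbbold{t}$, then for every $X_i(w') \in [u]_\equiv$ the congruence conditions give $\varphi_{ik}(w') = \mathbbold{t}$ and $X_{a_{ik}}(f_{ik}(w')) \equiv v_{m_u+1}$, whence $f_{ik}([u]_\equiv) \subseteq [v_{m_u+1}]_\equiv$.

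The main obstacle is condition~(\ref{def:ePG-loop}), the parity property for infinite paths in $G'$. The difficulty is that a naive choice of the indices $m_u$ may yield a $G'$-path whose long-term behaviour does not match that of $\pi$ (for instance, a self-loop at an early class that $\pi$ eventually leaves), so that the corresponding set $Z^\infty$ along $G'$ might differ from that along $\pi$. To overcome this, I would choose the $m_u$ guided by $\pi$'s eventual behaviour: the classes visited infinitely often along $\pi$ form a recurrent set, and for these I would choose $m_u$ to match $\pi$'s long-term transitions; for classes visited only finitely often, I would choose $m_u$ so that the unique $G'$-path from $[X(v)]_\equiv$ eventually enters the recurrent cycle. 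Because equivalent active vertices share a single predicate variable, the resulting set $Z^\infty$ along the $G'$-path coincides with that along $\pi$, and the even minimum-rank condition transfers from $\pi$ via Lemma~\ref{lem:PG}~(\ref{lem:PG:3}). Finally $[X(v)]_\equiv \in V'$ since $X(v) = v_0 \in V$, so the constructed $G'$ proves $X(v) = \mathbbold{t}$.
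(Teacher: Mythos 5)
Your overall route is the same as the paper's: take the single infinite path $\pi$ guaranteed by Proposition~\ref{prop:PG-loop}, project it onto the congruence classes to get $\pi'$, and check the three conditions of Definition~\ref{def:ePG}; your verification that the projected edges lie in $S$ and that condition~(\ref{def:ePG-succ}) follows from the congruence matches the paper's argument. You go beyond the paper in noticing that the projected sequence need not induce a functional graph (a class can recur in $\pi'$ with different successor classes), a point the paper passes over when it asserts that conditions~(\ref{def:ePG-terminal}) and~(\ref{def:ePG-loop}) hold ``immediately from the form of $\pi'$''.

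However, your resolution of condition~(\ref{def:ePG-loop}) contains a genuine gap. The sentence ``the resulting set $Z^\infty$ along the $G'$-path coincides with that along $\pi$'' is exactly the nontrivial claim, and the reason you give (equivalent vertices share a predicate variable) does not yield it: once one successor per class is fixed, the unique path of $G'$ is no longer a reparametrisation of $\pi'$, and the cycle (or infinite tail) it settles into may involve only a proper subset of the classes that recur in $\pi'$, so the minimum rank can jump from even to odd. Concretely, if $\pi$ passes $X_2(0)\to X_2(1)\to X_1(1)\to\cdots$ with $X_2(0)\equiv X_2(1)$ and $\rank(X_2)$ odd, taking the successor of $\eqclass{X_2(0)}$ from its first visit creates a self-loop on an odd-rank class even though $\pi$ escapes to $X_1$; dually, ``last visit'' choices can fail when infinitely many classes occur, because the resulting path may drift through transient classes whose variables realise only the larger ranks of $Z^\infty(\pi)$. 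A correct completion needs an explicit argument --- for instance, observe that the edges of $\pi'$ used infinitely often make the recurrent classes strongly connected, close a cycle through a recurrent class of minimum (hence even) rank, and orient all other reachable classes toward that cycle; alternatively, invoke positional determinacy of one-player parity games on the quotient. As written, the existence of a choice of the $m_u$ with the property you need is asserted but not shown, so condition~(\ref{def:ePG-loop}) is not established.
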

\begin{proof}
 We give a way to construct a sub-graph of $S$ from a
 given proof graph $G$.
 By Proposition~\ref{prop:PG-loop},
 $G$ consists of an infinite path $\pi$ starting from $X(v)$.
  Let $\pi$ be the following sequence:
  \[
  \pi:\ X(v) = X_{\ell_0}(v_0) \rightarrow X_{\ell_1}(v_1) \rightarrow \cdots
  \]
 for some sequence $\ell_0,\ell_1,\ldots$.
 Let $G'$ be the graph consisting of the following sequence $\pi'$
  \[
  \pi':\ \eqclass{X_{\ell_0}(v_0)} \rightarrow \eqclass{X_{\ell_1}(v_1)} \rightarrow \cdots
  \]
where ${\equiv}$ is the congruence relation that characterizes $S$.

 First, we show that $G'$ is a sub-graph of $S$.
 Obviously, all vertices in $G'$ are also in $S$.
 Let $m\geq 0$.
 Since $X_{\ell_m}(v_m) \to X_{\ell_{m+1}}(v_{m+1})$ appears in the proof graph $G$,
 there exists $k \in \mathbb{N}$ such that $a_{\ell_mk} = \ell_{m+1}, f_{\ell_mk}(v_m) = v_{m+1}$ and $\varphi_{\ell_{m}k}(v_m) = \mathbbold{t}$
 from Lemma~\ref{lem:PG}.
 From the definition of congruence relations, $\alpha^A_{\ell_mk}(X_{\ell_m}(v_m))$ is defined and its value is $X_{\ell_{m+1}}(v_{m+1})$.
 Thus, we have $\eqclass{\alpha^A_{\ell_mk}(X_{\ell_m}(v_m))} = \eqclass{X_{\ell_{m+1}}(v_{m+1})}$, and $\eqclass{X_{\ell_m}(v_m)} \to \eqclass{X_{\ell_{m+1}}(v_{m+1})}$ also appears in $S$.
 
 Next, we show that $G'$ is a reduced proof graph.
 The conditions (\ref{def:ePG-terminal}) and (\ref{def:ePG-loop}) in Definition \ref{def:ePG}
 hold immediately from the form of $\pi'$ and the condition (\ref{lem:PG:1}) in Lemma \ref{lem:PG}.
  Since ${\equiv}$ is congruent and $\alpha^A_{\ell_mk}(X_{\ell_m}(v_m))$ is defined, it follows that
  $X_{\ell_{m+1}}(f_{\ell_m k}(v)) \in \eqclass{X_{\ell_{m+1}}(v_{m+1})}$ for any $X_{\ell_m}(v) \in \eqclass{X_{\ell_m}(v_m)}$.
 Therefore, the condition (\ref{def:ePG-succ}) holds.
 \QED
\end{proof}

From Lemma~\ref{lem:depsp}, if a proof graph exists then 
there exists a reduced proof graph as a sub-graph of the dependency space.
For example, we see that the reduced proof graph in Example~\ref{eg:ePG} is a sub-graph of the dependency space shown in Example~\ref{eg:depsp}.

Note that data-quantifier free and conjunctive PBESs can be defined dually to data-quantifier free and disjunctive PBESs,
and we have the dual results for data-quantifier free and conjunctive PBESs.

\section{Graph Construction}\label{sec:semi-alg}
In this section, we propose a procedure to construct the reduced
dependency space induced from the maximal congruence,
where the maximal congruence induces the most general reduced dependency space.
We start from $n$
vertices $\{X_1(v)\mid v\in\mathbb{D}\}, \ldots, \{X_n(v)\mid
v\in\mathbb{D}\}$ and divide the sets until the conditions of the
congruence relation are satisfied.
This procedure is captured as repetition of division operations on a partition of
$\mathbb{D}$ for each $i\in\{1,\ldots,n\}$, where a family $\Phi$ of
sets is a \emph{partition} of $\mathbb{D}$ if every two different sets in $\Phi$ are
disjoint and the union of $\Phi$ is equal to $\mathbb{D}$.
At the end of this section, we prove soundness and completeness of the procedure, i.e.,
the procedure returns the most general reduced dependency space if it is finite.

We define a function $H$ that takes a tuple of partitions $\<
\Psi_1,\ldots,\Psi_n >$ and returns a tuple of partitions
obtained by doing necessary division operations to elements $\Psi_i$'s.  The procedure
repeatedly applies $F$ to the initial tuple of partitions until it
saturates.  If it halts, the resulted tuple induces a reduced dependency 
space.
In the procedure, Boolean expressions are used to denote (possibly infinite) subsets
of the data domain $\mathbb{D}$.
In other words, a Boolean expression $\phi(d)$ can be regarded as a set $\{v \in \mathbb{D} \mid \phi(v)\}$. 
In the sequel, we abuse operations on sets to denote Boolean operations.
For example, we may use the binary operators $\cap$ (resp.\ $\subseteq$) on sets
for intersection (resp.\ implication) in Boolean expressions.

We give intuitive explanation of the division.  Suppose a formula $\varphi_{ik}(d)$ is $d < 10$ in a given PBES.
Then, we have to divide the data domain $\mathbb{D}$ into $\{v \in \mathbb{D} \mid v < 10\}$ and $\{v \in \mathbb{D} \mid v \not < 10\}$, because
the condition (\ref{item:cong1}) of the congruence relation requires the coincidence of the defined-ness of $\alpha_{ik}$ for all data in a set,
where $\alpha_{ik}(v)$ is defined if and only if $\varphi_{ik}(v)$ holds. 
The condition (\ref{item:cong2}) requests a similar division.
Now we prepare this operation.
In general, we must divide each set in a partition $\Phi$ according to a formula $\psi$.
We define this division operation as follows:
\[
 \Phi \otimes \psi := \{\phi \cap \psi \mid \phi \in \Phi\} \cup
 \{\phi \cap \ov{\psi} \mid \phi \in \Phi\}
\]
This operator obviously satisfies $(\Phi \otimes \psi_1) \otimes \psi_2 = (\Phi \otimes \psi_2) \otimes \psi_1$, thus
we can naturally extend it on sets of formulas as follows:
\[
 \Phi \otimes \{\psi_1,\ldots,\psi_p\} = \Phi \otimes \psi_1 \otimes \cdots \otimes \psi_p
\]
It is easily shown that if $\Phi$ is a partition of $\mathbb{D}$, then $\Phi \otimes \Psi'$ is also a partition of $\mathbb{D}$ for a set $\Psi'$ of formulas.

In constructing partitions of data sets, it is not necessary to apply the division due to the condition (\ref{item:cong1}) for the congruence more than once.  Thus we use partitions
resulted by such a division as the initial ones.
The \emph{tuple of initial partitions} are $\<\Omega_1,\dots,\Omega_n>$,
where
\(
 \Omega_i = \{ \mathbb{D} \} \otimes \{\varphi_{i1},\ldots, \varphi_{im_i}\}
\).
Note that $\Omega_i$ consists of at most $2^{m_i}$ sets, because each element $\omega$
is included in $\varphi_{ik}$ or $\ov{\varphi_{ik}}$.

The condition (\ref{item:cong2}) for the congruence requires that
$X_i(v) \equiv X_i(w) \implies \alpha_{ik}^{A}(X_i(v)) \equiv \alpha_{ik}^{A}(X_i(w))$
if $\alpha^A_{ik}(X_i(v))$ is defined.
We recall this condition by an example.
We assume the current partitions $\<\Phi_1,\Phi_2> = \langle\{d\leq 0,\; d>0\},\; \{d\leq 0,\; d > 0\}\rangle$
and a clause $d > 0 \wedge X_2(d-1)$ in the equation for $X_1$.
The set represented by $d>0$ in $\Phi_1$ obviously satisfies the formula $d>0$ in the clause, thus all elements in $\{d-1 \mid d>0\}$ should be included in a set in $\Phi_2$, but they are not included in.
Therefore, we will divide the set represented by $d>0$ in $\Phi_1$ into two sets as illustrated by $\{d>0\} \otimes \{d-1\leq 0,\; d-1>0\} = \{d>0\wedge d\leq 1,\; d>1\}$.
This is formalized as follows.
\begin{definition}
 The \emph{partition function} $H_{ik}$ for each $i$ and $k$ is defined as follows:
 \[
 \begin{array}{l}
  H_{ik}(\<\Psi_1,\dots,\Psi_n>) := \<\Psi'_1,\dots,\Psi'_n>\\
  \Psi'_j = \left\{\begin{array}{ll}
			 \Psi_j & (i\ne j)\\
					\{\psi \in \Psi_j \mid \psi \subseteq \ov{\varphi_{ik}}\} \cup
					 \left(\{\psi \in \Psi_j \mid \psi \subseteq \varphi_{ik}\} \otimes \Psi_{a_{ik}}[f_{ik}(d)/d]\right)
					 & (i=j)
				   \end{array}
		   \right.
 \end{array}
 \]
 where $\Psi[d'/d]$ is the set of formulas each of which is obtained from a formula in $\Psi$ by replacing $d$ with $d'$.
\end{definition}
Here $\psi$ satisfying $\psi \subseteq \ov{\varphi_{ik}}$ is not divided, because $\alpha^A_{ik}(X_i(v))$ is not defined for $v$ in the set represented by $\psi$.  On the other hand, $\psi$ satisfying $\psi \subseteq \varphi_{ik}$ is divided so that the image $f_{ik}(\psi)$ is included in some set in $\Psi_{a_{ik}}$, because $\alpha^A_{ik}(X_i(v))$ is defined.

 Partition functions are bundled as follows:
 \[
 \begin{array}{rcl}
  H(\<\Psi_1,\dots,\Psi_n>) &:=& \left(H_1\circ\dots\circ H_n\right)(\<\Psi_1,\dots,\Psi_n>)\\
   H_i(\<\Psi_1,\dots,\Psi_n>) &:=& \left(H_{i1}\circ\dots\circ H_{i{m_i}}\right)(\<\Psi_1,\dots,\Psi_n>)
 \end{array}
 \]
 where $\circ$ denotes composition, i.e., $(f\circ g)(x)=g(f(x))$.
 The function $H_i$ denotes the partition of $\Psi_i$ using some functions $H_{i1},\dots,H_{i{m_i}}$.
 The function $H$ takes a series of partitionings due to the condition (2) for the congruence. 

 We define the partition procedure that applies the partition function $H$ to initial partitions $\<\Omega_1,\ldots,\Omega_n>$ until it saturates.
 We write the family of the partitions obtained from the procedure as $H^{\infty}(\<\Omega_1,\ldots,\Omega_n>)$.

\begin{example}
 Consider the data-quantifier free and disjunctive PBES $\Epsilon_4$ given as follows:
  \[
  \begin{array}[t]{rcl}
   \nu X_1(d: N)& = & (\mathrm{true} \wedge X_1(d+1))\ \vee\ \left(d \geq 1\ \wedge\ X_2\left(d-1\right)\right)\\
   \mu X_2(d: N)& = & (\mathrm{true} \wedge X_2(d+1))\ \vee\ \left(d \leq 0\ \wedge\ X_1\left(d\right)\right)
  \end{array}
  \]
 The reduced dependency space for $\Epsilon_4$ is:

 \[
\smallskip
 \xymatrix{
 X_1(\{0\}) \ar[r] & X_1(\{1\}) \ar[r] \ar[ld]& X_1(\{d \mid d > 1\}) \ar@(ul,ur) \ar[ld]& \\
 X_2(\{0\}) \ar[u]\ar[r] & X_2(\{d \mid d > 0\}) \ar@(ld,rd) &
 }
\smallskip
 \]

\noindent
 In order to construct this, we first calculate initial partitions.
 \[
 \begin{array}[t]{rcccccl}
   \Omega_1 &=& (\{\mathrm{true}\} \otimes \mathrm{true}) \otimes (d \geq 1) &=&
   \{\mathrm{true}\} \otimes (d \geq 1) &=& \{d \geq 1, d<1\}\\
   \Omega_2 &=& (\{\mathrm{true}\} \otimes \mathrm{true}) \otimes (d \leq 0) &=&
   \{\mathrm{true}\} \otimes (d \leq 0) &=& \{d \leq 0, d>0\}
 \end{array}
 \]
We omit the element 
equivalent to $\mathrm{false}$
from partitions because it represents an empty set.

 Next, we apply $H$ to $\<\Omega_1,\Omega_2>$.
 \[
 \begin{array}{rcl}
  (H_{11} \circ H_{12})(\<\Omega_1,\Omega_2>) &=& H_{12}(\<\emptyset \cup (\{d \geq 1, d < 1\} \otimes \{d+1 \geq 1, d+1<1\}), \Omega_2\bigr>)\\
  &=& H_{12}(\<\{d \geq 1, d < 1\}, \Omega_2 >)\\
  &=& \<{\{d<1\} \cup (\{d \geq 1\} \otimes \{d-1 \leq 0, d-1 > 0\}), \Omega_2}>\\
  &=& \<{\{d<1, d=1, d>1\}, \Omega_2}>\\
 \end{array}
 \]
 We also apply $H_{21}$ and $H_{22}$ in a similar way, and $\Omega_2$ does not change.
 As a result, we obtain $H(\<\Omega_1,\Omega_2>) = \langle\{d<1, d=1, d>1\}, \Omega_2\rangle$, which is already a fixed point.
 Hence, the procedure returns
 \[
  H^{\infty}\left(\<\Omega_1,\Omega_2>\right) = 
              \langle\{d<1, d=1, d>1\},
		\{d\leq 0, d>0\}\rangle
 \]
 This partition induces the set of vertices in the reduced dependency space.
\end{example}

We prepare some technical lemmas on the operation $H$.
\begin{lemma}\label{lem:cong}
 For a tuple $\Omega = \<\Omega_1,\ldots,\Omega_n>$ of initial partitions,
 the tuple $H^{\infty}(\Omega)$ of the partitions is the quotient set of the algebra 
 induced from a PBES $\Epsilon$ for some congruence.
 In other words, letting $H^{\infty}(\Omega)$ = $\<\Omega'_1,\ldots,\Omega'_n>$, the following two properties hold:
 \[\begin{array}{ll}
 \forall i, \forall k, \forall \omega \in \Omega'_i, (\omega \subseteq \varphi_{ik}) \vee (\omega \subseteq \ov{\varphi_{ik}}),\\
 \forall i, \forall k, \forall \omega \in \Omega'_i, \forall \omega' \in \Omega'_{a_{ik}},
 (\omega \subseteq \varphi_{ik}) \Rightarrow ((\omega \subseteq \omega'[f_{ik}(d)/d])
 \vee (\omega \subseteq \ov{\omega'[f_{ik}(d)/d]}))
 \end{array}\]
\end{lemma}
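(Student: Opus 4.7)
The plan is to split the proof of Lemma~\ref{lem:cong} into two parts: first, establish the two displayed conditions at $H^{\infty}(\Omega)$; second, observe that they are precisely what is needed for the resulting partition to induce a congruence on the algebra $\mathcal{A}$ induced from $\Epsilon$. The first condition will turn out to be an invariant of the iteration, starting from the initial partition, while the second will be forced by the fact that $H^{\infty}(\Omega)$ is a fixed point of $H$.

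For the first condition, it holds of the initial tuple $\langle \Omega_1,\dots,\Omega_n \rangle$ directly from the definition of $\otimes$, since every $\omega \in \Omega_i = \{\mathbb{D}\} \otimes \{\varphi_{i1},\dots,\varphi_{im_i}\}$ is an intersection that, for each $k$, chooses either $\varphi_{ik}$ or $\ov{\varphi_{ik}}$. I would then show that every $H_{jl}$ preserves this invariant by noting that $H_{jl}$ only refines its input: components other than the $j$-th are untouched, and each new block in the $j$-th component is either an unchanged element of $\Psi_j$ or a subset of one obtained via $\otimes$. So every new block is a subset of some old block and therefore still lies entirely inside $\varphi_{ik}$ or inside $\ov{\varphi_{ik}}$.

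For the second condition, write $\langle \Omega'_1,\dots,\Omega'_n \rangle = H^{\infty}(\Omega)$, fix $i,k$, and take $\omega \in \Omega'_i$ with $\omega \subseteq \varphi_{ik}$. At the fixed point we have $H_{ik}(\Omega') = \Omega'$, so the defining equation forces
\[
\{\psi \in \Omega'_i \mid \psi \subseteq \varphi_{ik}\} \otimes \Omega'_{a_{ik}}[f_{ik}(d)/d]
\;=\; \{\psi \in \Omega'_i \mid \psi \subseteq \varphi_{ik}\},
\]
whence $\omega \cap \omega'[f_{ik}(d)/d] \in \{\omega,\emptyset\}$ for every $\omega' \in \Omega'_{a_{ik}}$, which is exactly the second condition. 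With both conditions in hand, it will be routine to verify that the relation ``$X_i(u) \equiv X_i(v)$ iff $u,v$ lie in the same block of $\Omega'_i$'' is a congruence on $\mathcal{A}$: the first condition yields coincidence of the defined-ness of $\alpha^A_{ik}$ across a block, and the second yields that when defined the images $\alpha^A_{ik}(X_i(u))$ and $\alpha^A_{ik}(X_i(v))$ share a block of $\Omega'_{a_{ik}}$.

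The main obstacle will be the step from ``$H$ fixes $\Omega'$'' to ``each constituent $H_{ik}$ individually fixes $\Omega'$''. I plan to handle this with a short monotonicity lemma: each $H_{ik}$ only refines, i.e., $H_{ik}(\Psi)$ is a refinement of $\Psi$ for every tuple $\Psi$ of partitions. A fixed point of the composition $H = H_1 \circ \cdots \circ H_n$ then cannot admit any strict refinement by a single factor, so every $H_{ik}$ individually fixes $\Omega'$. A second subtlety, worth only a remark, is the meaning of $H^{\infty}$ when the iteration does not halt: taking it as the common refinement (join in the refinement order) of all iterates keeps the first-condition invariant pointwise and hence in the limit, so the argument above continues to apply.
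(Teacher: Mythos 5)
Your proposal is correct and follows essentially the same route as the paper: the first property is shown to be an invariant established by the initial partitions and preserved by $H$, and the second property is extracted from the fact that $H^{\infty}(\Omega)$ is a fixed point of each $H_{ik}$ (the paper phrases this as a proof by contradiction, you phrase it directly, but the content is identical). Your explicit monotonicity lemma justifying the step from ``$H$ fixes $\Omega'$'' to ``each individual $H_{ik}$ fixes $\Omega'$'', and your remark on the meaning of $H^{\infty}$ for non-terminating iterations, make explicit two points the paper's proof leaves implicit.
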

\begin{proof}
 The former property for $\Omega_i$'s follows from the definition of initial partitions and the fact that  $H$ preserves the property.
 We prove the latter property by contradiction.
 Let 
 $\omega \subseteq \varphi_{ik}$,
 $\omega \not \subseteq \omega'[f_{ik}(d)/d]$, and $\omega \not \subseteq \ov{\omega'[f_{ik}(d)/d]}$ for some $i,k,\omega \in \Omega'_i$, and $\omega' \in \Omega'_{a_{ik}}$.
 Then, we have $\omega\cap\omega'' \neq \emptyset$ and $\omega\cap\ \ov{\omega''} \neq \emptyset$, where $\omega''$ denotes $\omega'[f_{ik}(d)/d]$.
 This implies that $\{\omega\} \otimes \omega''$ results in two non-empty sets
 $\omega \cap \omega''$ and $\omega \cap \ov{\omega''}$ by division of $\omega$.
 Combining this and the fact that $\Omega'_1$ is a partition, it follows that $\<\Omega'_1,\ldots,\Omega'_n>$ is not a fixed point of $H_{ik}$, which contradicts the assumption.
 \QED
\end{proof}

For a tuple of partitions $\Psi = \<\Psi_1, \ldots , \Psi_n>$ on a data-quantifier free and disjunctive PBES $\Epsilon$,
we define a relation ${\sim_\Psi}$ on the partial algebra $\<A,F^A>$ defined
by $\Epsilon$ (see Definition~\ref{def:indA}) as follows:
\[
  X_i(v) \sim_\Psi X_j(v') \iff i=j \wedge \exists \psi\in \Psi_j\ (\psi(v)\wedge\psi(v'))
\]
  Note that it is trivial that $\sim_{\Psi}$ is an equivalence relation since each $\Psi_i$ is a partition of $\mathbb{D}$.

\begin{lemma}
Let $\<\Omega'_1,\ldots,\Omega'_n>$ be a fixed point of $H$.
Then $\sim_{\Omega'}$ is congruent.
\end{lemma}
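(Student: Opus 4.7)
The plan is to verify the two defining conditions of a congruence relation directly, leveraging the two fixed-point properties already established in Lemma~\ref{lem:cong}. I would start by fixing $a, b \in A$ with $a \sim_{\Omega'} b$; by definition this forces $a = X_i(v)$ and $b = X_i(v')$ for a common index $i$ and a common block $\omega \in \Omega'_i$ with $v, v' \in \omega$. Since $\alpha^A_{ik}$ is undefined on any $X_j(\cdot)$ with $j \neq i$, the congruence conditions need only be checked for the operators $\alpha^A_{i1}, \dots, \alpha^A_{im_i}$ applied to $X_i(v)$ and $X_i(v')$.

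For condition (\ref{item:cong1}) (coincidence of defined-ness), I would invoke the first property in Lemma~\ref{lem:cong}: each $\omega \in \Omega'_i$ lies entirely inside either $\varphi_{ik}$ or $\ov{\varphi_{ik}}$. Since $v, v' \in \omega$, this gives $\varphi_{ik}(v) \iff \varphi_{ik}(v')$, so $\alpha^A_{ik}(X_i(v))$ is defined if and only if $\alpha^A_{ik}(X_i(v'))$ is.

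For condition (\ref{item:cong2}), I would assume $\alpha^A_{ik}(X_i(v))$ is defined, so $\omega \subseteq \varphi_{ik}$, and then show $X_{a_{ik}}(f_{ik}(v)) \sim_{\Omega'} X_{a_{ik}}(f_{ik}(v'))$. The idea is to pick the block $\omega' \in \Omega'_{a_{ik}}$ containing $f_{ik}(v)$, which exists because $\Omega'_{a_{ik}}$ partitions $\mathbb{D}$. Then $v \in \omega'[f_{ik}(d)/d]$, so $\omega \cap \omega'[f_{ik}(d)/d] \neq \emptyset$, and the second property of Lemma~\ref{lem:cong} forces $\omega \subseteq \omega'[f_{ik}(d)/d]$; substituting $v'$ then yields $f_{ik}(v') \in \omega'$, as required. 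I do not anticipate a genuine obstacle: the argument is essentially an unpacking of definitions, with the only subtlety being a careful per-block application of the two dichotomies supplied by Lemma~\ref{lem:cong}.
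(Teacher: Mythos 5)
Your proof is correct and follows essentially the same route as the paper's: both verify the two congruence conditions directly from the two dichotomies established in Lemma~\ref{lem:cong}, with your version merely spelling out the elimination step ($\omega \cap \omega'[f_{ik}(d)/d] \neq \emptyset$ rules out $\omega \subseteq \ov{\omega'[f_{ik}(d)/d]}$) that the paper leaves implicit.
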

\begin{proof}
  Let $X_j(v) \sim_{\Omega'} X_j(v')$.  Then there exists $\omega\in \Omega'_j$ such that $\omega(v)$ and $\omega(v')$ hold.
  Suppose $\alpha^A_{ik}(X_j(v))$ is defined, then $i=j$ and $\varphi_{ik}(v)$ hold.
  From the former property of Lemma~\ref{lem:cong}, $\varphi_{ik}(v')$ holds.  Thus, $\alpha^A_{ik}(X_j(v'))$ is also defined, which shows (\ref{item:cong1}) of the definition of congruence.

If $\alpha^A_{ik}(X_j(v))$ is defined, it is equal to $X_{a_{ik}}(f_{ik}(v))$ and
also $\alpha^A_{ik}(X_j(v'))=X_{a_{ik}}(f_{ik}(v'))$.
Since $\Omega'_{a_{ik}}$ is a partition, there exists $\omega'\in \Omega'_{a_{ik}}$ such that $\omega'(f_{ik}(v))$ holds.
From the second property of  Lemma~\ref{lem:cong}, 
$\omega'(f_{ik}(v'))$ also holds, which shows (\ref{item:cong2}) of the definition of congruence.
\QED
\end{proof}

The following theorem follows from this lemma.
\begin{theorem}
 If the procedure terminates, then the partitions $H^{\infty}(\<\Omega_1,\ldots,\Omega_n>)$ are the vertices of a reduced dependency space.
\end{theorem}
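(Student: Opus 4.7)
The plan is to read off this theorem as a direct consequence of the preceding lemma together with the definitions of reduced dependency space and of $\sim_{\Psi}$. First I would observe that ``the procedure terminates'' means exactly that $H^{\infty}(\<\Omega_1,\ldots,\Omega_n>) = \<\Omega'_1,\ldots,\Omega'_n>$ is a fixed point of $H$. Applying the preceding lemma to this fixed point, the equivalence $\sim_{\Omega'}$ is a congruence on the partial algebra $\mathcal{A} = \<A, F^A>$ induced from $\Epsilon$.

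Next I would identify the carrier of the quotient algebra $\mathcal{A}/{\sim_{\Omega'}}$ with the family of partition blocks produced by the procedure. By the definition of $\sim_{\Omega'}$, two signatures $X_i(v)$ and $X_j(v')$ are equivalent iff $i=j$ and some block $\omega \in \Omega'_j$ contains both $v$ and $v'$. Because each $\Omega'_i$ is a partition of $\mathbb{D}$, every data value lies in exactly one block, so the equivalence classes of $\sim_{\Omega'}$ are exactly the sets $\{X_i(v) \mid v \in \omega\}$ ranging over $i \in \{1,\ldots,n\}$ and $\omega \in \Omega'_i$. Under the convention $X_i(\omega) = \{X_i(v) \mid v \in \omega\}$ introduced in Section~\ref{sec:ePG}, these classes are precisely the partition blocks returned by the procedure.

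Finally, since $\sim_{\Omega'}$ is a congruence, the quotient algebra $\mathcal{A}/{\sim_{\Omega'}}$ is well-defined, and the graph induced from it is, by the definition of reduced dependency space, a reduced dependency space of $\Epsilon$. Its vertex set is the carrier of $\mathcal{A}/{\sim_{\Omega'}}$, which by the previous paragraph coincides with the partitions $H^{\infty}(\<\Omega_1,\ldots,\Omega_n>)$, proving the theorem. The only non-routine point, and what I would consider the main (though modest) obstacle, is the identification of the equivalence classes of $\sim_{\Omega'}$ with the partition blocks; but this is immediate from the fact that each $\Omega'_i$ is a partition of $\mathbb{D}$, rather than merely a cover.
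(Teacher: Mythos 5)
Your proposal is correct and follows essentially the same route as the paper, which proves this theorem simply by remarking that it ``follows from this lemma'' (the lemma that $\sim_{\Omega'}$ is congruent for any fixed point $\Omega'$ of $H$); your write-up merely spells out the routine identification of the equivalence classes of $\sim_{\Omega'}$ with the partition blocks, which the paper leaves implicit.
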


We prepare lemmas for proving the completeness of $H^\infty(\<\Omega_1,\ldots,\Omega_n>)$.
\begin{lemma}
 Let ${\equiv}$ be a congruence on a given data-quantifier free and disjunctive PBES,
 and $\Omega = \<\Omega_1,\ldots,\Omega_n>$ be the tuple of initial partitions.
 Then, ${\sim_{\Omega}} \supseteq {\equiv}$.
\end{lemma}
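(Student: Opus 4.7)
The plan is to take an arbitrary pair with $X_i(v) \equiv X_j(w)$ and exhibit a single element of some $\Omega_\ell$ witnessing $X_i(v) \sim_\Omega X_j(w)$. The argument is essentially a direct unpacking of the congruence conditions against the construction of the initial partitions.

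First I would reduce to the same-sort case $i = j$. Since $\sim_\Omega$ is defined only for equal predicate indices, and since vertices of any reduced dependency space are tagged by elements of $\mathrm{bnd}(\Epsilon) \times 2^{\mathbb{D}}$, it is natural to restrict attention to congruences that preserve the predicate symbol, so this step is immediate from the convention. With $i = j$ fixed, I would invoke congruence condition (\ref{item:cong1}) against each partial operation $\alpha^A_{ik}$ for $1 \le k \le m_i$: its defined-ness at $X_i(u)$ coincides with the truth of $\varphi_{ik}(u)$, so the condition forces $\varphi_{ik}(v) \iff \varphi_{ik}(w)$ for every $k$.

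From this equal pattern of satisfaction I would read off a common initial-partition element. Concretely, define $\omega := \bigcap_{k=1}^{m_i} \psi_k$, where $\psi_k = \varphi_{ik}$ if $\varphi_{ik}(v) = \mathbbold{t}$ and $\psi_k = \ov{\varphi_{ik}}$ otherwise. By construction of $\Omega_i = \{\mathbb{D}\} \otimes \{\varphi_{i1},\ldots,\varphi_{im_i}\}$, every non-empty such intersection lies in $\Omega_i$, and by the previous paragraph both $v$ and $w$ belong to $\omega$. Applying the definition of $\sim_\Omega$ then yields $X_i(v) \sim_\Omega X_j(w)$, as required.

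I expect no deep obstacle: the entire argument uses only congruence condition (\ref{item:cong1}) together with the definition of the initial partitions, and the slightly delicate point about cross-sort pairs is a matter of convention rather than substance. Notably, condition (\ref{item:cong2}) is not needed here; that condition governs the refinements performed by iterated applications of $H$, whereas the initial partitions already encode exactly the consequences of (\ref{item:cong1}). This also explains why the lemma fits naturally as the base case of a completeness argument whose inductive step would show that each application of $H$ preserves the containment ${\equiv} \subseteq {\sim_{\cdot}}$.
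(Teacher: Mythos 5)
Your proposal is correct and follows essentially the same route as the paper: both arguments rest solely on congruence condition (\ref{item:cong1}) together with the fact that $\Omega_i=\{\mathbb{D}\}\otimes\{\varphi_{i1},\ldots,\varphi_{im_i}\}$ separates points exactly according to the truth pattern of the $\varphi_{ik}$'s; the paper merely phrases it contrapositively (assuming $X_i(v)\not\sim_\Omega X_i(w)$ and deriving a defined-ness mismatch), whereas you argue directly, and you are in fact slightly more explicit than the paper about the implicit restriction to same-symbol pairs.
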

\begin{proof}
 We show the lemma by contradiction.
 Suppose there exist $X_i(v)$ and $X_i(w)$ such that $X_i(v) \equiv X_i(w)$ and $X_i(v) \not\sim_{\Omega} X_i(w)$
 for some $v,w \in \mathbb{D}$ and $i \in \{1,\ldots,n\}$.
 Since $\Omega_i$ is a partition, there exists a $\omega\in\Omega_i$ such that $\omega(v)$ holds.
 Because $X_i(v) \not\sim_{\Omega} X_i(w)$, $\omega(w)$ does not hold.
 This means from the definition of initial partition that $\phi_{ik}(v)$ holds but $\phi_{ik}(w)$ does not for some $k \in\{1,\ldots,m_i\}$.
 Thus, $\alpha^A_{ik}(X_i(v))$ is defined but $\alpha^A_{ik}(X_i(v))$ is not defined, which contradicts $X_i(v) \equiv X_i(w)$.
 \QED
\end{proof}
\begin{lemma}
 Let ${\equiv}$ be a congruence on $\Epsilon$, and $\Psi$ be a tuple of partitions.
 Then, ${\sim_{\Psi}} \supseteq {\equiv}$ implies ${\sim_{H(\Psi)}} \supseteq {\equiv}$.
\end{lemma}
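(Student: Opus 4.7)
The plan is to decompose $H$ into the atomic partition operators $H_{ik}$ and establish the corresponding preservation statement for each of them. Since $H = H_1 \circ \cdots \circ H_n$ and $H_i = H_{i1} \circ \cdots \circ H_{im_i}$, the conclusion ${\sim_{H(\Psi)}} \supseteq {\equiv}$ follows by iterating the atomic claim: if ${\sim_\Psi} \supseteq {\equiv}$ then ${\sim_{H_{ik}(\Psi)}} \supseteq {\equiv}$ for each pair $(i,k)$.

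To prove the atomic step I fix $(i,k)$ and pick an arbitrary pair $X_l(v) \equiv X_l(w)$, and exhibit a block of $H_{ik}(\Psi)_l$ containing both $v$ and $w$. There are two cases. If $l \neq i$, then $H_{ik}$ leaves the $l$-th component untouched, so the block in $\Psi_l$ furnished by the hypothesis ${\sim_\Psi} \supseteq {\equiv}$ remains available. If $l = i$, let $\psi \in \Psi_i$ be the block containing $v$ and $w$. By the first property in Lemma~\ref{lem:cong} (which holds for the initial partitions and is preserved by $H$, so may be assumed of any $\Psi$ arising in the procedure) either $\psi \subseteq \ov{\varphi_{ik}}$ or $\psi \subseteq \varphi_{ik}$. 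In the former subcase $\psi$ itself survives in $H_{ik}(\Psi)_i$ and we are done. In the latter subcase $\varphi_{ik}(v)$ and $\varphi_{ik}(w)$ both hold, so $\alpha^A_{ik}(X_i(v))$ and $\alpha^A_{ik}(X_i(w))$ are defined.

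Now I invoke the congruence property: since $X_i(v) \equiv X_i(w)$, condition~(\ref{item:cong2}) yields $X_{a_{ik}}(f_{ik}(v)) \equiv X_{a_{ik}}(f_{ik}(w))$. Applying the hypothesis ${\sim_\Psi} \supseteq {\equiv}$ to this pair gives some $\omega \in \Psi_{a_{ik}}$ with $f_{ik}(v), f_{ik}(w) \in \omega$, equivalently $v, w \in \omega[f_{ik}(d)/d]$. Combining this with $v, w \in \psi$, both elements lie in the single block $\psi \cap \omega[f_{ik}(d)/d]$, which is a member of $\{\psi\} \otimes \Psi_{a_{ik}}[f_{ik}(d)/d]$ and hence of $H_{ik}(\Psi)_i$.

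The main obstacle is precisely this last subcase, which is the only place where the two nontrivial properties have to be threaded together: (i) the definedness of $\alpha^A_{ik}$ on both $X_i(v)$ and $X_i(w)$, secured by $\psi \subseteq \varphi_{ik}$; (ii) condition~(\ref{item:cong2}) of $\equiv$, used to transport the equivalence along $\alpha^A_{ik}$; and (iii) the inductive hypothesis ${\sim_\Psi} \supseteq {\equiv}$ applied at the target index $a_{ik}$. Once these three ingredients are combined, locating the common block via the $\otimes$-refinement is purely syntactic, and iterating the atomic step over the finitely many factors of $H$ delivers ${\sim_{H(\Psi)}} \supseteq {\equiv}$.
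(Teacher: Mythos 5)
Your proof is correct and takes essentially the same approach as the paper's: both reduce to the atomic operators $H_{ik}$ and combine condition~(\ref{item:cong2}) of the congruence with the hypothesis ${\sim_{\Psi}} \supseteq {\equiv}$ applied at index $a_{ik}$ to conclude that refining by $\Psi_{a_{ik}}[f_{ik}(d)/d]$ cannot separate $\equiv$-equivalent elements. The only difference is presentational --- the paper argues the atomic step by contradiction whereas you directly exhibit the common block $\psi \cap \omega[f_{ik}(d)/d]$ --- and the dichotomy $\psi \subseteq \varphi_{ik}$ or $\psi \subseteq \ov{\varphi_{ik}}$ that you flag explicitly is implicitly assumed in the paper's proof as well.
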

\begin{proof}
 From the definition of $H$, it is enough to show that ${\sim_{\Psi}} \supseteq {\equiv}$ implies ${\sim_{H_{ik}(\Psi)}} \supseteq {\equiv}$ for arbitrary $1\leq i\le n$ and $1\leq k\leq m_i$.
 We show this by contradiction.
 We assume ${\sim_{\Psi}} \supseteq {\equiv}$ and ${\sim_{H_{ik}(\Psi)}} \not\supseteq {\equiv}$.  
 Let $\Psi = \<\Psi_1,\ldots,\Psi_n>$ and $H_{ik}(\Psi) = \<\Psi'_1,\ldots,\Psi'_n>$.
 Then, from the definition of $H_{ik}$, we have ${\Psi'}_j=\Psi_j$ for any $j$ ($\not=i$).
 This implies that $X_i(v) \equiv X_i(w)$ and $X_i(v) \not\sim_{H_{ik}(\Psi)} X_i(w)$ for some $v,w\in\mathbb{D}$.
Note that
\[
   \Psi'_i=\{\psi \in \Psi_i \mid \psi \subseteq \ov{\varphi_{ik}}\} \cup
   \left(\{\psi \in \Psi_i \mid \psi \subseteq \varphi_{ik}\} \otimes \Psi_{a_{ik}}[f_{ik}(d)/d]\right).
\]

 From $X_i(v) \equiv X_i(w)$ and ${\sim_\Psi}\supseteq{\equiv}$, there exists $\psi\in\Psi_i$ such that $\psi(v)$ and $\psi(w)$ hold.  
 Since $\psi\not\in \Psi'_i$ due to $X_i(v) \not\sim_{H_{ik}(\Psi)} X_i(w)$, the formula $\psi$ is divided by a formula $\omega[f_{ik}(d)/d]$ for some $\omega\in \Psi_{a_{ik}}$.
 Thus, $\omega[f_{ik}(d)/d](v)$ holds but $\omega[f_{ik}(d)/d](w)$ does not without loss of generality, and $\varphi_{ik}(v)$ and $\varphi_{ik}(w)$ also hold.
 The former means that $\omega(f_{ik}(v))$ holds but $\omega(f_{ik}(w))$ does not.
 Since $\alpha_{ik}(X_i(v))=X_{a_{ik}}(f_{ik}(v))$ and $\alpha_{ik}(X_i(w))=X_{a_{ik}}(f_{ik}(w))$, we obtain $\alpha_{ik}(X_i(v)) \not\sim_{\Psi} \alpha_{ik}(X_i(w))$.
 Since ${\sim_\Psi} \supseteq {\equiv}$, we get $\alpha_{ik}(X_i(v)) \not\equiv \alpha_{ik}(X_i(w))$,
 which contradicts the assumption $X_i(v) \equiv X_i(w)$.
\QED
\end{proof}

The completeness follows from these lemmas.
\begin{theorem}
 Let $\Omega'$ be the least fixed point of $H$ containing the tuple $\Omega$ of initial partitions.  Then $\sim_{\Omega'}$ is the maximal congruence.
 Thus, $H^\infty(\Omega)$ induces the most general reduced dependency space.
\end{theorem}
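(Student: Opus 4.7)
The plan is to combine the three lemmas just proven. First, recall that the second-to-last theorem already gives us that $\sim_{\Omega'}$ is itself a congruence whenever $\Omega'$ is a fixed point of $H$. So the only remaining content of the theorem is \emph{maximality}: for every congruence $\equiv$ on the partial algebra induced by $\Epsilon$, we must establish ${\equiv} \subseteq {\sim_{\Omega'}}$.

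I would prove this containment by induction on the number of iterations of $H$ applied to the tuple of initial partitions $\Omega$. The base case ${\sim_{\Omega}} \supseteq {\equiv}$ is exactly the first of the two preparatory lemmas. The inductive step, ${\sim_{H^m(\Omega)}} \supseteq {\equiv} \Rightarrow {\sim_{H^{m+1}(\Omega)}} \supseteq {\equiv}$, is a direct application of the second preparatory lemma with $\Psi := H^m(\Omega)$. Together these give ${\sim_{H^m(\Omega)}} \supseteq {\equiv}$ for every $m \ge 0$.

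The main obstacle is the transition from ``all finite iterates contain ${\equiv}$'' to ``the least fixed point $\Omega'$ contains ${\equiv}$''. Since $H$ only ever refines partitions, the sequence $\Omega, H(\Omega), H^2(\Omega), \dots$ is a descending chain in the refinement order, and $\Omega' = H^\infty(\Omega)$ is its limit. Consequently, two elements $v, w \in \mathbb{D}$ lie in a common block of $\Omega'_i$ if and only if they lie in a common block of every $H^m(\Omega)_i$; that is, ${\sim_{\Omega'}} = \bigcap_{m \ge 0} {\sim_{H^m(\Omega)}}$. Since each term of this intersection contains ${\equiv}$, so does the intersection, yielding ${\sim_{\Omega'}} \supseteq {\equiv}$. (If the procedure terminates then $\Omega' = H^M(\Omega)$ for some $M$, and the argument collapses to a single application of the inductive claim.)

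Having established both (i) $\sim_{\Omega'}$ is a congruence and (ii) every congruence is contained in $\sim_{\Omega'}$, we conclude that $\sim_{\Omega'}$ is the maximal congruence on the partial algebra induced by $\Epsilon$. The quotient by the maximal congruence is the coarsest quotient whose induced graph is still a reduced dependency space, so $H^\infty(\Omega)$ induces the most general reduced dependency space, as claimed.
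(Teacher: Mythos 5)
Your proposal is correct and follows exactly the route the paper intends: the paper gives no explicit proof, merely stating that ``the completeness follows from these lemmas,'' and your argument---congruence of $\sim_{\Omega'}$ from the fixed-point lemma, the base case and inductive step from the two preparatory lemmas, and the passage to the limit via ${\sim_{\Omega'}} = \bigcap_{m\ge 0}{\sim_{H^m(\Omega)}}$---is precisely the intended assembly. If anything, you are more careful than the paper in spelling out the limit step for the non-terminating case.
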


The following corollary can be immediately obtained from the above theorem.
\begin{corollary}
 For a given PBES, $H^\infty(\Omega)$ induces a finite reduced dependency space, if it exists.
\end{corollary}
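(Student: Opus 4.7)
The plan is to deduce the corollary directly from the preceding theorem, which states that $\sim_{\Omega'}$ is the maximal congruence, together with a simple counting argument relating coarseness of congruences to cardinality of quotients.

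First, I would fix the setup. Suppose some finite reduced dependency space exists for the given PBES. By definition, this space is induced by a quotient algebra $\mathcal{A}/{\equiv}$ for some congruence $\equiv$, and ``finite'' means that the carrier $A/{\equiv}$ is a finite set, i.e.\ $\equiv$ has only finitely many equivalence classes on each $\{X_i(v)\mid v\in\mathbb{D}\}$. Let $\Omega'=H^\infty(\Omega)$ denote the least fixed point of $H$ above the initial partition $\Omega$, which exists as stated in the previous theorem.

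Next, I would invoke the maximality given by the previous theorem: ${\equiv} \subseteq {\sim_{\Omega'}}$. A basic fact about equivalence relations is that if ${\equiv_1} \subseteq {\equiv_2}$, then each $\equiv_2$-class is a union of $\equiv_1$-classes, so $|A/{\equiv_2}| \le |A/{\equiv_1}|$. Applying this with $\equiv_1={\equiv}$ and $\equiv_2={\sim_{\Omega'}}$ yields that each component $\Omega'_i$ of $\Omega'$ has at most as many blocks as the corresponding partition of $\equiv$, hence is finite. Consequently $\Omega'$ has finitely many blocks in total, so the induced reduced dependency space is finite.

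It remains to argue that the procedure actually reaches $\Omega'$ in finitely many iterations, so that ``$H^\infty(\Omega)$ induces'' is not just a formal limit. Because $H$ only ever refines its argument, the sequence $\Omega, H(\Omega), H^2(\Omega),\ldots$ is monotone with respect to refinement and is bounded above by $\Omega'$. Since $\Omega'$ has finitely many blocks (by the previous paragraph), every intermediate partition does as well, so the sequence of block counts is a non-decreasing sequence of natural numbers bounded by $|\Omega'_1|+\cdots+|\Omega'_n|$. Such a sequence must stabilise after finitely many steps, at which point $H$ has a fixed point that equals $\Omega'$ by minimality. Hence the procedure terminates and returns the finite reduced dependency space $\Omega'$.

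The only point that requires mild care, and which I view as the main obstacle to write out cleanly, is the refinement-monotonicity of $H$; this needs to be checked from the definition of $H_{ik}$, where the $\otimes$ operation never merges blocks and the carve-out $\{\psi\in\Psi_j \mid \psi\subseteq\ov{\varphi_{ik}}\}\cup\cdots$ at worst splits them. Once this monotonicity is observed, the bounded-monotone-sequence argument terminates the proof. \QED
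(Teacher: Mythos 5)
Your proof is correct and takes essentially the route the paper intends: the corollary is read off directly from the preceding theorem, since the maximal congruence $\sim_{\Omega'}$ is coarser than any congruence $\equiv$ inducing a finite reduced dependency space and therefore has no more equivalence classes. Your extra paragraph making termination explicit (the iterates of $H$ only refine, are bounded in block count by the finite $\Omega'$, and a refinement with the same finite number of blocks must coincide with it) fills in a detail the paper leaves implicit but does not change the argument.
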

This corollary says that a finite reduced dependency space is eventually found by $H^\infty(\<\Omega_1,\ldots,\Omega_n>)$ if it exists.
There exists, however, a data-quantifier free and disjunctive PBES having a finite reduced proof graph
but no finite reduced dependency space.
This is shown by the following example.
\begin{example}
 Consider the following data-quantifier free and disjunctive PBES:
 \[
 \begin{array}[t]{rcl}
  \nu X_1(d: N)& = & (\mathrm{true} \wedge X_1(d+1))\\
  \mu X_2(d: N)& = & (d > 0 \wedge X_2(d-1))\vee\left(\mathrm{true} \wedge X_1(d)\right).
 \end{array}
 \]
 There is an infinite proof graph of $X_2(0)$ as shown below, but no finite one.
 \[
 \xymatrix{
 X_1(0) \ar[r]
 & X_1(1) \ar[r]
 & \cdots \\
 X_2(0) \ar[u]
 }
 \]
 On the other hand, there is a finite reduced proof graph of $X_2(0)$ shown as follows:
 \vspace{1em}
 \[
 \xymatrix{
 X_1(\{0,1,\ldots\}) \ar@(ul,ur) \\
 X_2(\{0\}) \ar[u]
 }
 \]
 There exists, however no finite reduced dependency space, because $H^\infty(\Omega)$ induces the following infinite reduced dependency space.
 \vspace{1em}
 \[
 \xymatrix{
 X_1(\{0,1,\ldots\}) \ar@(ul,ur) \\
 X_2(\{0\}) \ar[u]
 & X_2(\{1\}) \ar[lu] \ar[l]
 & \ar[llu] \ar[l] \cdots
 }
 \]
\end{example}
 This example shows that a reduced dependency space may be possibly infinite although a finite reduced proof graph exists.

Considering an implementation of this procedure, it is reasonable to use a set of Boolean expressions for representing a partition.
The division operation $\otimes$ in the procedure may produce 
unsatisfiable expressions, which is unnecessary in partitions and hence should be removed.
An incomplete unsatisfiability check easily causes a non-termination of the procedure for a PBES, even if the procedure with complete satisfiability check terminates.
Thus, the unsatisfiability check of Boolean expressions is one of the most important issues in implementing the procedure.

For instance, examples illustrated in this paper are all in the class of Presburger arithmetic, which is the first-order theory of the natural numbers which has addition.
It is known that the unsatisfiability check of Boolean expressions in this class is decidable~\cite{presburger1931vollständigkeit}.
Therefore, the procedure enjoys the completeness property for this class.

\section{An Example: Trading Problem}
This section illustrates a simple but more realistic application of our method.

There are villages $A,B$ and they trade with each other.
The trade is taken by using a truck initially located in $A$.
Whenever the truck moves, each village earns profit according to the moving path of the truck, and requires fixed cost for a living.
Moreover, there is a place $C$ which supplies the truck with fuel.
We consider the following problem: \emph{Is there a schedule for the truck satisfying that
the balance of each village is always positive and the truck visits place $C$ infinitely often?}

The following graph shows the restriction of the truck movement, 
where the pair $(x,y)$ on each arrow denotes the amount of money which $A$ and $B$ get by trading, respectively.
\[
 \xymatrix{
 &C \ar[d]^{(0,0)}\\
 A \ar[ru]^{(0,0)} \ar@<0.5ex>[r]^{(a,b)} & B \ar@<0.5ex>[l]^{(c,d)}
 }
\]
We write the living expenses of $A$ (resp.\ $B$) as $E_A$ (resp.\ $E_B$).

This problem is encoded as a PBES in the following way.
The PBES has three predicate variables $X_A$, $X_B$, and $X_C$, where $X_\alpha(x,y)$ is true if and only if there is a successful track schedule from the configuration, where the amounts of money in $A$ and $B$ are $x$ and $y$, respectively, and the truck is located in $\alpha$.
\[
\begin{array}[t]{rcl}
 \nu X_C(x,y)& = & x-E_A \geq 0 \wedge y-E_B \geq 0 \wedge X_B(x-E_A, y-E_B)\\
 \mu X_A(x,y)& = & (x-E_A \geq 0 \wedge y-E_B \geq 0 \wedge X_C(x-E_A, y-E_B))\\
 && \vee (x+a-E_A \geq 0 \wedge y+b-E_B \geq 0 \wedge X_B(x+a-E_A, y+b-E_B))\\
 \mu X_B(x,y)& = & x+c-E_A \geq 0 \wedge y+d-E_B \geq 0 \wedge X_A(x+c-E_A, y+d-E_B)
\end{array}
\]

Let $(a,b,c,d) = (4,3,3,4)$ and $E_A = E_B = 1$, then the PBES can be simplified as below:
\[
\begin{array}[t]{rcl}
 \nu X_C(x,y)& = & x \geq 1 \wedge y \geq 1 \wedge X_B(x-1, y-1)\\
 \mu X_A(x,y)& = & (x \geq 1 \wedge y \geq 1 \wedge X_B(x-1, y-1))
  \vee (\text{true} \wedge X_B(x+3, y+2))\\
 \mu X_B(x,y)& = & \text{true} \wedge X_A(x+2, y+3)
\end{array}
\]
For this problem, our procedure produces the partitions $\<\{C_1,C_2\}, \{A_1, A_2, A_3\}, \{B_1\}>$, where
\[
 \begin{array}{lll}
 C_1=x\geq 1 \wedge y \geq 1,       & C_2= \lnot(x\geq 1 \wedge y \geq 1), &\\
 A_1=\lnot(x\geq 1 \wedge y \geq 1),& A_2= (x\geq 1 \wedge y \geq 1)\wedge \lnot(x\geq 2 \wedge y \geq 2), & A_3= x\geq 2 \wedge y \geq 2, \text{ and}\\
 B_1=\mathrm{true}.
 \end{array}
\]
The reduced dependency space induced from these partitions is illustrated in Figure~\ref{fig:depsp}.

We show a reduced proof graph in Figure~\ref{fig:rpg}.
Since there is no reduced proof graph of $X_C(x,y)$ satisfying $C_1$ nor $X_A(x,y)$ satisfying $A_1$, the displayed reduced proof graph characterizes initial configuration having successful track schedules. 
Considering the case that the initial location of the truck is in $A$, 
the condition for the initial amount $x$ for $A$ and $y$ for $B$ is $A_2 \vee A_3$, which is simplified as $x \geq 1 \wedge y \geq 1$.
\begin{figure}[ht]
 \begin{minipage}{0.5\hsize}
  \[
  \xymatrix{
  X_C(C_1) & X_C(C_2)\ar[ldd] \\
  X_A(A_1) & X_A(A_2)\ar[u]\ar[ld] & X_A(A_3)\ar[lu]\ar@<-0.5ex>[lld]\\
  X_B(B_1) \ar@<-0.5ex>[rru]
  }
  \]
  \caption{reduced dependency space}
  \label{fig:depsp}
 \end{minipage}
 \begin{minipage}{0.5\hsize}
  \[
  \xymatrix{
   & X_C(C_2)\ar[ldd] \\
   & X_A(A_2)\ar[u] & X_A(A_3)\ar[lu]\\
  X_B(B_1) \ar@<-0.5ex>[rru]
  }
  \]
  \caption{reduced proof graph}
  \label{fig:rpg}
 \end{minipage}
\end{figure}

\section{Related Work}
Approaches to transforming an infinite domain (or state space) into an equivalent finite domain (or state space)
with regard to a certain criterion such as behavioral equivalence or congruence with operations
can be found in various topics in logics and formal verification.

First, of course, the minimization algorithms of state transition systems 
such as finite automata and tree automata use a technique of iteratively 
dividing the state space until being congruent with state transitions, 
which can be regarded as simple cases for the construction in this paper. 

Predicate abstraction \cite{Graf1997} is a standard abstraction method 
in software model checking. This method divides an infinite state space 
by introducing appropriate number of predicates that serve as state components 
and determining the state transitions between subspaces using weakest preconditions.
CEGAR (Counterexample-Guided Abstraction Refinement) \cite{Clarke:2003:CAR:876638.876643} iterates the above abstraction
by using a pseudo-counterexample
until the abstracted system satisfies a given verification property or 
a real counterexample to the verification property is found. 

Timed automata (TA) is one of the most popular models of timed systems.
The state space of a TA is infinite because a state contains clocks, which are real numbers. 
For model-checking a TA, the (infinite) state space of the TA is transformed into a finite state space
by region construction or zone construction (see Chapter 17 of~\cite{Clarke:2000:MC:332656}).  Those constructions divide the whole state space
into finite number of subspaces so that subspaces are congruent with state transitions. 
These constructions are similar to the construction in this paper although the former only 
concern TAs.

All of the above-mentioned methods do not deal with fixed-point operations. 
LFP (logics with fixed-point operations) refers to a family of logics
which are extensions of first-order logic by
adding least and greatest fixed-point operations. 
Finite model theory for LFP have been investigated in depth (see Chapters 2 and 3 of~\cite{Gradel:2005:FMT:1206819} for example), 
that assumes only {\em finite} models. 
In contrast, PBES was proposed for investigating the model checking problem of 
first-order $\mu$-calculus that assumes {\em infinite} models in general. 

\section{Conclusions}
We have introduced reduced proof graphs and 
have shown that the existence of a proof graph for data-quantifier free 
and disjunctive PBESs coincides with the existence of a reduced proof graph.
The notion of reduced proof graphs is valuable because there exists
a PBES having a finite reduced proof graph but corresponding proof graphs are all infinite.
We also have shown a way to find a reduced proof graph by constructing the dependency space.
From these results, we obtained a method to solve 
data-quantifier free and disjunctive PBESs
characterized by infinite proof graphs.

Removing \emph{data-quantifier free} restriction is one of future works.

\subsection*{Acknowledgements}

We thank the anonymous reviewers very much for their useful comments to improve this paper.  

\nocite{*}
\bibliographystyle{eptcs}
\bibliography{myrefs}

\end{document}